\newcolumntype{P}[1]{>{\centering\arraybackslash}m{#1}}
\begin{document}

\newcommand{\ob}{\mbox{$\overline{\omega}$}}
\newcommand{\om}{\mbox{$\omega$}}
\newcommand{\Tr}{\mbox{Tr}}
\newcommand{\C}{\mbox{$\cal C$}}
\newcommand{\Cperb}{\mbox{$\C^\bot$}}
\newcommand{\BF}{\mathbf}
\newcommand{\ben}{\begin{equation*}}
\newcommand{\een}{\end{equation*}}
\newcommand{\be}{\begin{equation}}
\newcommand{\ee}{\end{equation}}
\newcommand{\Go}{\Gamma_{H_1}}
\newcommand{\Gt}{\Gamma_{H_2}}
\newcommand{\Gtt}{\Gamma_{H_3}}

\title{Secret sharing schemes based on additive codes over $GF(4)$ }


\author{Jon-Lark Kim         \and
        Nari Lee 
}


\institute{Jon-Lark Kim \at
              Department of Mathematics \\ Sogang University  \\ Seoul, 121-742, South Korea \\
 \email{jlkim@sogang.ac.kr}
           \and
           Nari Lee \at
             Department of Mathematics \\
Sogang University \\
Seoul 121-742, South Korea \\
 \email{narilee3@gmail.com}
}

\date{Received: date / Accepted: date}

\maketitle

\begin{abstract}
A secret sharing scheme (SSS) was introduced by Shamir in 1979 using polynomial interpolation. Later it turned out that it is equivalent to an SSS based on a Reed-Solomon code. SSSs based on linear codes have been studied by many researchers. However there is little research on SSSs based on additive codes. In this paper, we study SSSs based on additive codes over $GF(4)$ and show that they  require at least two steps of calculations to reveal the secret. We also define minimal access structures of SSSs from additive codes over $GF(4)$ and  describe SSSs using some interesting additive codes over $GF(4)$ which contain generalized 2-designs.
\keywords{access structure \and additive codes \and generalized $t$-design \and minimal access structure \and secret sharing scheme }
\subclass{94A62 \and 11T71}
\end{abstract}

\section{Introduction}

A $secret~ sharing~ scheme$ (SSS) is a method of distributing a secret to a finite set of participants such that only predefined subsets of the participants can recover the secret.  All the participants receive a piece of the secret, known as a $share$, in such a way that only qualified subsets of the participants can have access to the secret by pooling the shares of their members.

The first construction of an SSS was done by Shamir \cite{Sha} and Blakley \cite{Bla} independently in 1979. Shamir used polynomial interpolation for constructing an SSS, while Blakley used hyperplane geometry. Later Shamir's SSS turned out to be equivalent to an SSS based on a Reed-Solomon code \cite{McE}. Some of SSSs were applied to various fields such as cloud computing, controlling nuclear weapons in military, recovering information from multiple servers, and controlling access in banking system.

Since an SSS plays an important role in protecting secret information, it has been studied by several authors (see \cite{Ding}, \cite{Dou}, \cite{Li}, and \cite{Massey}). In particular, Massey \cite{Massey} used  linear codes for secret sharing and pointed out the relationship between the access structure and the minimal codewords of the dual code of the underlying code in 1993.  However there has been less attention to SSSs based on additive codes. Since additive codes include linear codes, we raise an intriguing question whether SSSs based on additive codes have more advantage than SSSs based on linear codes.

Ding et al. \cite{Ding} remark the following.  Normally the weight distribution of a code is very hard to determine and that of only a few classes of codes is known. As an SSS can be constructed from any error-correcting linear code, what matters is how we are going to determine the access structure. The access structure of SSSs based on error-correcting codes depends on the weight distribution of their dual codes. To determine the access structure for the SSSs we need more information than the weight distribution. This makes it difficult to determine the access structure of SSSs based on codes, as determining the weight distribution of codes is difficult.

In this paper we introduce SSSs based on linear codes with an example in Section 2. In Section 3.2, we  define SSSs based on additive codes and show that they require two steps of calculations, while the SSSs based on linear codes require only one step of calculation.  In Section 3.3,  we define minimal access structures of SSSs based on additive codes over GF(4), which is developed from Proposition 2 in \cite{Ding2}. We also describe SSSs based on a hexacode, a dodecacode over $GF(4)$ which is described and analyzed in \cite{Lark}, and $S_{18}$. We determine the access structure of the SSSs and prove their properties.  The access structure for these SSSs is more abundant than that of the SSSs based on linear codes. We are able to determine the access structure of these SSSs because the structure of the underlying  additive codes is thoroughly understood.

\section{Some preliminaries}
\label{sec:intro}

\ Let $GF(q)$ be a finite field with $q=p^r$ elements, where $p$ is a prime and $r$ is a positive integer.  The $Hamming\ distance$ between two vectors $\bold{x,y}\in GF(q)^n$ is defined to be the number of coordinates in which $\bold{x}$ and $\bold{y}$ differ. Note that the $minimum\ distance$ $d$ of a code $C$ is the smallest nonzero distance between two distinct codewords and is important in determining the error-correcting capability of $C$; the higher the minimum distance, the more errors the code can correct. In general, it can detect up to $d-1$ errors and correct up to  $\lfloor \frac{d-1}{2} \rfloor$ errors.

The $Hamming\ weight$ of a vector $\bold{c}$ in $GF(q)^n$ denoted by $wt(\bold{c})$  is the total number of nonzero coordinates. Let $A_i$, also denoted by $A_i(C)$, be the number of codewords of weight $i$ in $C$. The list of  $A_i$ for $0\leq i \leq n $ is called the $weight\ distribution$ of $C$. An $[n,k,d]$ code $C$ is a  linear subspace of $GF(q)^n$ with dimension $k$ and minimum nonzero Hamming weight $d$. A generator matrix $G$ for an $[n,k,d]$ code $C$ is any $k\times n$ matrix $G$ whose rows form a basis for $C$.

\ We refer to  \cite{Yuan}. Let $G=(\bold{g}_0, \bold{g}_1,\cdots, \bold{g}_{n-1})$ be a generator matrix of an $[n,k,d]$ code. We assume that none of $\bold{g}_i$'s is the zero vector.  Let $\bold{g}_i~(1\leq i \leq n-1)$  be a column vector. In an SSS constructed from an $[n,k,d]$ linear code $C$, the secret is an element of $GF(q)$ and there are $n-1$ participants $P_1, P_2, \cdots, P_{n-1}$ and a dealer $P_0$. To compute the shares with respect to the secret $s$, the dealer randomly takes an element $\bold{u}$=$(u_0,u_1,\cdots, u_{k-1})\in GF(q)^k$ such that $s=\bold{ug}_0$. The dealer  treats $\bold{u}$ as an information vector and computes the corresponding codeword
\begin{center}
$\bold{t}=(t_0,t_1,\cdots,t_{n-1})$=$\bold{u}G$
\end{center}
and the dealer gives the share $t_i$ to participant $P_i$ as share for each $i\geq 1$.\\

Since $t_0=\bold{ug}_0=s$, a set of shares $\{t_{i_1}, t_{i_2},\cdots ,t_{i_m}\}$ determines the secret $s$ if and only if the column $\bold{g}_0$ of the generating matrix $G$ is a linear combination of the columns $\{g_{i_1},g_{i_2},\cdots , g_{i_m}\}$ of $G$.

\begin{lemma} \em{ (\cite{Yuan})} \label{lemma1} \it{Let $C$ be an $[n,k,d]$ linear code over the finite field $GF(q)$ and let $C^{\perp}$ be its dual code. In the secret sharing scheme based on $C$, a subset of shares $\{t_{i_1}, t_{i_2},\cdots ,t_{i_m}\}$, $1\leq i_1 \leq \cdots \leq i_m\leq n-1$, determines the secret if and only if there is a codeword
\begin{equation}\label{L1}
(1,0, \cdots ,0,c_{i_1},0, \cdots ,0,c_{i_m},0, \cdots ,0 )
\end{equation}
in $C^{\perp}$ with $c_{i_j} \neq 0$ for at least one $j$.}
\end{lemma}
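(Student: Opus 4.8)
The plan is to build directly on the column characterization recorded immediately before the lemma: the shares $\{t_{i_1}, t_{i_2}, \ldots, t_{i_m}\}$ determine the secret precisely when the column $\mathbf{g}_0$ of $G$ is a linear combination of the columns $\mathbf{g}_{i_1}, \ldots, \mathbf{g}_{i_m}$. Granting this, it remains only to show that this column condition is equivalent to the existence of a codeword of the form \eqref{L1} in $C^{\perp}$ with $c_{i_j}\neq 0$ for at least one $j$. So the whole proof reduces to translating a linear relation among the columns of $G$ into a codeword of the dual code.

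The bridge I would establish first is the standard fact that a vector $\mathbf{x} = (x_0, x_1, \ldots, x_{n-1}) \in GF(q)^n$ belongs to $C^{\perp}$ if and only if it is orthogonal to every row of $G$, i.e. $G\mathbf{x}^{T} = \mathbf{0}$, which rewritten columnwise reads $\sum_{j=0}^{n-1} x_j \mathbf{g}_j = \mathbf{0}$. Under this identification, a codeword of $C^{\perp}$ whose support is contained in $\{0, i_1, \ldots, i_m\}$ and whose $0$th coordinate equals $1$ is exactly a relation $\mathbf{g}_0 + \sum_{l=1}^{m} c_{i_l}\mathbf{g}_{i_l} = \mathbf{0}$.

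With the bridge in hand both implications are immediate. If $\mathbf{g}_0 = \sum_{l=1}^{m} a_l \mathbf{g}_{i_l}$, I set $x_0 = 1$, $x_{i_l} = -a_l$ for $1 \le l \le m$, and $x_j = 0$ otherwise; the bridge then certifies that this $\mathbf{x}$ lies in $C^{\perp}$ and has precisely the shape \eqref{L1}. Conversely, a codeword of the form \eqref{L1} yields $\mathbf{g}_0 = -\sum_{l=1}^{m} c_{i_l}\mathbf{g}_{i_l}$, exhibiting $\mathbf{g}_0$ as a linear combination of the chosen columns.

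The single point I would treat with care, and the nearest thing to an obstacle, is the clause ``$c_{i_j}\neq 0$ for at least one $j$''. Since we have assumed that no $\mathbf{g}_i$ is the zero vector, $\mathbf{g}_0 \neq \mathbf{0}$; consequently any relation expressing $\mathbf{g}_0$ through the columns $\mathbf{g}_{i_1}, \ldots, \mathbf{g}_{i_m}$ must use a nonzero coefficient, and conversely a dual codeword of the form \eqref{L1} cannot have all of $c_{i_1}, \ldots, c_{i_m}$ equal to zero without forcing $\mathbf{g}_0 = \mathbf{0}$. Thus the nonvanishing clause is automatic from $\mathbf{g}_0 \neq \mathbf{0}$ and simply records that the relation is nondegenerate. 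Beyond checking this, I expect no further difficulty, as the argument is a verbatim translation between the column relation and the dual codeword.
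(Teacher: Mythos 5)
Your proposal is correct and takes essentially the same approach as the paper: both rest on the column characterization stated immediately before the lemma and on identifying dual codewords with linear relations $\sum_{j} x_j\mathbf{g}_j=\mathbf{0}$ among the columns of $G$. The paper (which cites \cite{Yuan} for this lemma) only sketches the recovery direction, noting that a codeword of the form \eqref{L1} gives $\mathbf{g}_0=\sum_{j=1}^{m}x_j\mathbf{g}_{i_j}$ and hence $s=\sum_{j=1}^{m}x_j t_{i_j}$; you spell out both implications and the nonvanishing clause, but the underlying argument is the same.
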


We explain how the secret is recovered using (\ref{L1}). If there is a codeword of (\ref{L1}) in $C^{\perp}$, then the vector $\mathbf{g}_0$ is a linear combination of $\BF{g}_{i_1},\ldots, \BF{g}_{i_m}$ , i.e.,
\ben
\mathbf{g}_0=\sum_{j=1}^{m}x_j \BF{g}_{i_j},
\een
where $x_j\in GF(q)$ for $1\leq j\leq m$.

Then the secret $s$ is recovered by computing
\ben
s=\sum_{j=1}^{m}x_j t_{i_j}.
\een

Here we think about a case of some malicious behaviors lying among participants, called $cheaters$. They modify their shares in order to cheat.  In this case, we make use of  detection of errors up to $d-1$ and correction of errors up to $\lfloor \frac{d-1}{2}\rfloor$.   The errors being considered as modified shares, SSSs based on error-correcting codes are able to detect up to $d-1$ cheaters and correct up to    $\lfloor \frac{d-1}{2} \rfloor$  cheaters.

If a group of participants can recover the secret by pooling their shares, then any group of participants containing this
group can also recover the secret.

\begin{definition}\label{def1}

\rm
An $access$  $group$ is a subset of a set of participants that can recover the secret from its shares. A collection $\Gamma$ of access groups  of participants is called an $access$  $structure$ of the scheme.  An element $A\in \Gamma$ is called a $minimal\ access\  group$  if no element of  $\Gamma$ is a proper
subset of $A$. Hence a set is a minimal access group if it can
recover the secret but no proper subset can recover the secret.
We let  $\bar{\Gamma}= \{A | A~\textrm{ is a minimal access group }\}$. We call
$\bar{\Gamma}$  the $minimal\ access\ structure$.

\end{definition}
In general, determining the
minimal access structure is a difficult problem \cite{Ding}.

\begin{definition}\label{def2}\rm The $support$ of a vector ${\bf c}=(c_0,\cdots,c_{n-1})\in GF(q)^n$ is defined by
\begin{center}
$supp({\bf c})=\{0\leq i \leq n-1\ |\ c_i\neq0\}$.
\end{center}
Let ${\bf c_1}$ and ${\bf c_2}$ be two codewords of a code $C$. We say that ${\bf c_1}$ $ covers $ ${\bf c_2}$ if $supp({\bf c_2}) \subseteq supp({\bf c_1})$.

If a nonzero codeword ${\bf c}$ covers only its scalar multiples, but no other codewords, then it is called a $minimal$ $codeword$.
\end{definition}

\begin{theorem} \textrm{\em{\cite{Ding2}}}\label{mini}
 Let $C$ be an $[n,k;q]$ code, and let $G=(\bold{g}_0, \bold{g}_1,\cdots, \bold{g}_{n-1})$ be its generator matrix, where all $\bold{g}_i$'s are nonzero.  If each nonzero codeword of $C$ is minimal, then in the secret sharing scheme based on $C^{\perp}$, there are altogether $q^{k-1}$ minimal access groups. In addition, we have the following:
\begin{itemize}
\item[$\bullet$] If $\bold{g}_i$ is a scalar multiple of  $\bold{g}_0$, $1\le i \le n-1$, then participant $P_i$ must be in every minimal access set. Such a participant is called a  $dictatorial$ $participant$.

\item[$\bullet$] If $\bold{g}_i$ is not a scalar multiple of  $\bold{g}_0$, $1\le i \le n-1$, then participant $P_i$ must be in $(q-1)q^{k-2}$ out of $q^{k-1}$ minimal access groups.

\end{itemize}
\end{theorem}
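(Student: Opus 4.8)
The plan is to translate the three claims of Theorem~\ref{mini} into statements about codewords of $C^\perp$ and then count them directly using Lemma~\ref{lemma1}, exploiting the hypothesis that every nonzero codeword of $C$ is minimal. First I would recall from Lemma~\ref{lemma1} that a subset of shares is a minimal access group precisely when there is a codeword of $C^\perp$ of the form $(1,0,\ldots,0,c_{i_1},\ldots,c_{i_m},0,\ldots,0)$, and the set of nonzero coordinates (excluding coordinate $0$) is minimal under inclusion. The key observation I would exploit is the duality between minimal codewords of $C$ and the access structure of the scheme on $C^\perp$: the hypothesis ``every nonzero codeword of $C$ is minimal'' should be used to show that, from the perspective of $C^\perp$, every codeword with a nonzero entry in coordinate $0$ gives rise to a genuine minimal access group, with no spurious containments among their supports.

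To count the total number $q^{k-1}$ of minimal access groups, I would focus on the codewords of $C^\perp$ whose $0$th coordinate equals $1$. Since the $0$th coordinate of codewords of $C^\perp$ is a $GF(q)$-linear functional on $C^\perp$ and the all-zero coordinate column $\bold{g}_0$ is nonzero (so this functional is surjective), exactly a $1/q$ fraction of the $|C^\perp| = q^{n-k}$ codewords have $0$th coordinate equal to any fixed value; but the relevant count is of \emph{supports}, and here I expect to pass through the minimality hypothesis to argue that distinct such supports are in bijection with minimal access groups. The cleanest route is to set up a correspondence: each minimal access group corresponds to the support (off coordinate $0$) of a codeword of $C^\perp$ with first coordinate $1$, and by minimality of codewords of $C$ these supports are all distinct and each arises exactly once, yielding the count $q^{k-1}$. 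I would verify the exponent by the standard fact that a code in which every nonzero codeword is minimal has exactly $q^{k}-1$ nonzero codewords partitioned into $(q^k-1)/(q-1)$ scalar-multiple classes, which should reconcile with $q^{k-1}$ after accounting for the normalization of the $0$th coordinate.

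For the two bullet points I would argue by a counting-within-the-class argument. For the dictatorial participant: if $\bold{g}_i = \lambda \bold{g}_0$ for some nonzero scalar $\lambda$, then in \emph{every} codeword of $C^\perp$ the $i$th coordinate is forced to be $-\lambda^{-1}$ times the $0$th coordinate (because the dependency among columns of $G$ forces a corresponding linear relation on $C^\perp$); hence whenever the $0$th coordinate is nonzero the $i$th coordinate is nonzero, so $P_i$ lies in the support of every minimal access codeword, placing $P_i$ in all $q^{k-1}$ minimal access groups. For the non-dictatorial case: if $\bold{g}_i$ is not a scalar multiple of $\bold{g}_0$, I would count how many of the $q^{k-1}$ normalized codewords have nonzero $i$th coordinate by viewing the pair of coordinates $(c_0, c_i)$ as the image of the linear map $C^\perp \to GF(q)^2$ evaluating at coordinates $0$ and $i$; the non-proportionality of $\bold{g}_0$ and $\bold{g}_i$ makes this map surjective, so among the codewords with $c_0 = 1$ exactly $(q-1)/q$ of them have $c_i \neq 0$, giving $(q-1)q^{k-2}$ out of $q^{k-1}$.

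The main obstacle I anticipate is the rigorous bijection in the counting step, namely showing that the map from normalized codewords of $C^\perp$ (with $0$th coordinate $1$) to minimal access groups is exactly the thing being counted, without overcounting scalar multiples or undercounting by collapsing distinct codewords with the same support. This is precisely where the hypothesis that every nonzero codeword of $C$ is minimal must do the heavy lifting: it guarantees that the supports behave well under the duality so that the access structure consists of exactly the supports of these codewords, each counted once. Establishing that this hypothesis forces the support map to be injective on the normalized codewords (equivalently, that no two distinct normalized codewords share a support) is the crux, and I would prove it by contradiction, using that a coincidence of supports would produce a nonzero codeword of $C$ covered by another, contradicting minimality.
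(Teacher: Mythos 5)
Your proposal founders on one fatal and systematic confusion: you carry out every argument in the wrong code. By Lemma~\ref{lemma1}, for a scheme based on a code $D$ the qualified share-sets are characterized by codewords of the \emph{dual} code $D^{\perp}$ with first coordinate $1$. In Theorem~\ref{mini} the scheme is based on $D=C^{\perp}$, so the relevant codewords live in $(C^{\perp})^{\perp}=C$ --- the code generated by $G$, the code to which the minimality hypothesis applies, and the code with $q^{k}$ codewords. You instead work throughout with codewords of $C^{\perp}$: your own count of normalized codewords is then $q^{n-k-1}$, not $q^{k-1}$, and you acknowledge the mismatch (``which should reconcile with $q^{k-1}$'') without being able to close it --- it cannot be closed, because no support-counting in a code of dimension $n-k$ produces the number $q^{k-1}$. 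The same confusion breaks both bullet points. Your claim that $\mathbf{g}_i=\lambda\mathbf{g}_0$ forces $x_i=-\lambda^{-1}x_0$ for every $x\in C^{\perp}$ is false: take $G=(1,\lambda,1)$, a $[3,1]$ code; then $(1,0,-1)\in C^{\perp}$ has $x_0\neq 0$ but $x_1=0$. (What is true, and what the theorem needs, is the corresponding statement for $C$: every codeword $\mathbf{c}=\mathbf{u}G$ satisfies $c_i=\mathbf{u}\mathbf{g}_i=\lambda\,\mathbf{u}\mathbf{g}_0=\lambda c_0$.) Likewise, surjectivity of the evaluation map at coordinates $0$ and $i$ on $C^{\perp}$ is not implied by non-proportionality of the columns $\mathbf{g}_0,\mathbf{g}_i$ of $G$; for $C$ it is exactly equivalent to linear independence of those columns.

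The irony is that every ingredient you list is the right one once transplanted into $C$, and that is how the standard proof runs (the paper itself cites this theorem from the literature, but its proof of the $GF(4)$ analogue in Section 3.3 follows precisely this template): parametrize codewords of $C$ as $\mathbf{c}=\mathbf{u}G$ with $\mathbf{u}\in GF(q)^{k}$; since $\mathbf{g}_0\neq 0$, exactly $q^{k-1}$ of them have $c_0=\mathbf{u}\mathbf{g}_0=1$; minimality of all nonzero codewords of $C$ makes the support map injective and the supports pairwise incomparable (two normalized codewords with nested supports cover one another, hence are scalar multiples, hence equal), giving exactly $q^{k-1}$ minimal access groups; the dictatorial bullet is immediate from $c_i=\lambda c_0$; and the last bullet follows because linear independence of $\mathbf{g}_0,\mathbf{g}_i$ makes $\mathbf{u}\mapsto(\mathbf{u}\mathbf{g}_0,\mathbf{u}\mathbf{g}_i)$ surjective with fibers of size $q^{k-2}$, so exactly $(q-1)q^{k-2}$ of the normalized codewords have $c_i\neq 0$. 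The repair is mechanical --- replace $C^{\perp}$ by $C$ and $n-k$ by $k$ everywhere --- but as written your proposal does not prove the theorem.
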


\begin{definition}\label{def3}
\rm A $t$-$(v,k,\lambda)$ $design$ or briefly a $t$-$design$, is a pair $(\mathcal{P,B})$ where $\mathcal{P}$ is a set of $v$ elements, called $points$, and $\mathcal{B}$ is a collection of distinct subsets of $\mathcal{P}$ of size $k$, called $blocks$, such that every subset of points of size $t$ is contained in precisely $\lambda$ blocks.
\end{definition}
For linear codes with a special weight distribution, a powerful result of the Assmus-Mattson theorem guarantees that a set of codewords with a fixed weight holds a $t$-design \cite{Ass}. The Assmus-Mattson Theorem has been the main tool in discovering designs in codes.

\begin{theorem}\textrm{(Assmus-Mattson \em{\cite{Huf}})}\label{theorem1}
Let $C$ be an $[n,k,d]$ code  over $GF(q)$. Suppose $C^{\perp}$ has minimum weight $d^{\perp}$. Let $w$ be the largest integer with $w\le n$ satisfying
\begin{equation*}
w- \lfloor \frac{w+q-2}{q-1} \rfloor < d.
\end{equation*}
(So $w=n$ when $q=2$.) Define $w^{\perp}$ analogously using $d^{\perp}$.  Suppose that $A_i=A_i(C)$ and $A^{\perp}_i=A_i(C^{\perp})$, for $0\le i \le n$, are the weight distributions of $C$ and $C^{\perp}$, respectively. Fix a positive integer $t$ with $t<d$, and let $s$ be the number of $i$ with $A_i^{\perp} \ne 0$  for $0<i\le n-t$. Suppose $s\le d-t$. Then:
\begin{itemize}
\item[(i)] the vectors of weight $i$ in $C$ hold a $t$-design provided $A_i\ne 0$ and $d\le i \le w$, and
\item[(ii)] the vectors of weight $i$ in $C^{\perp}$ hold a $t$-design provided $A^{\perp}_i \ne 0$ and $d^{\perp} \le i \le$ min$\{n-t, w^{\perp}\}$.
\end{itemize}
\end{theorem}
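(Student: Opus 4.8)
The statement is the classical Assmus--Mattson theorem, so the plan is to follow the standard route that converts the design property into a statement about the weight distributions of shortened codes, and then forces those distributions to be independent of the chosen coordinate set by an interpolation argument powered by the smallness hypothesis $s \le d-t$.

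First I would reformulate the conclusion combinatorially. Fixing a weight $i$, the supports of the weight-$i$ codewords of $C$ form a $t$-design exactly when every $t$-subset of coordinates lies in a constant number of these supports. By the complementary-design principle (a collection of fixed-size subsets is a $t$-design if and only if the collection of complements is), this is equivalent to requiring that, for every $t$-set $T$, the number of weight-$i$ codewords of $C$ that vanish on all of $T$ be independent of $T$. That number is exactly $A_i$ of the shortened code $C_T$ obtained by deleting the coordinates in $T$. So the target becomes: for every $t$-set $T$, the weight distribution of $C_T$ does not depend on $T$. The cleanest way to capture both parts at once is to work with the \emph{split} weight enumerator of $C$ relative to $T$, namely the numbers $A(a,i)$ counting codewords of weight $i$ whose support meets $T$ in exactly $a$ positions; the design in $C$ will follow from constancy of the slice $A(0,i)$.

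Second, I would bring in duality and the minimum-distance hypothesis. Since $d=d(C)>t$, no nonzero codeword of $C$ is supported inside a $t$-set, which makes the low-weight behaviour rigid and fixes the relevant dimensions. The key structural fact is that the split weight enumerator of $C$ relative to $T$ and that of $C^\perp$ relative to $T$ are related by a (two-variable) MacWilliams transform, a bijective correspondence. Consequently, once I prove that the split enumerator of $C$ is independent of $T$, the split enumerator of $C^\perp$ is automatically independent of $T$ as well, and the slice counting weight-$i$ codewords of $C^\perp$ vanishing on $T$ then yields the design of part (ii) for free, from the very same computation that yields part (i). Here the hypothesis that $C^\perp$ has only $s$ nonzero weights in the range $(0,n-t]$ enters: it confines the possible weights of the punctured/shortened dual to a short list, so the genuine unknowns are just the refined counts recording how the words of each nonzero weight of $C^\perp$ meet $T$.

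Third --- and this is the crux --- I would pin these refined counts down by a linear system and show it has a unique, $T$-independent solution. The natural equations are the Pless power moments of orders $0,1,\dots,t$ together with the vanishing of all weights below $d$ (valid because shortening cannot decrease the minimum distance, so $d(C_T)\ge d$); these relate the unknowns only to the global, $T$-independent quantities $A_i$, $A_i^\perp$ and to $t=|T|$. The inequality $s\le d-t$ is precisely what makes the number of usable equations at least the number $s$ of unknown nonzero weights, so that the associated coefficient matrix --- a Vandermonde-type matrix in the distinct weights, arising from the Krawtchouk/MacWilliams transform --- is nonsingular and the solution is forced. Because the forced values depend only on $A_i$, $A_i^\perp$ and $t$, they are identical for every $t$-set $T$; the ranges $d\le i\le w$ and $d^\perp\le i\le \min\{n-t,w^\perp\}$ are exactly where the relevant weight classes are nonempty and the moment argument applies. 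The main obstacle I expect is this last step: setting up the moment system so that the count of independent equations matches $s\le d-t$, and verifying nonsingularity of the Vandermonde-type coefficient matrix. The two reductions (to shortened-code weight distributions, and via the split MacWilliams identity to $C^\perp$) are comparatively routine, but seeing that $s\le d-t$ is exactly the threshold making the system uniquely solvable is the delicate heart of the argument.
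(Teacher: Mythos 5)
A preliminary remark: the paper never proves this statement. It is the classical Assmus--Mattson theorem, quoted from Huffman and Pless \cite{Huf} and stated only so it can be applied (its additive analogue, Theorem~\ref{Larkt}, is likewise imported from \cite{Lark}). So your proposal can only be judged against the standard proof in the cited reference. Your overall architecture does coincide with that proof: the complementary-design principle, the reduction to $T$-independence of weight distributions of shortened codes, MacWilliams duality between shortened and punctured codes, and a power-moment/Vandermonde system whose unique solvability is what $s\le d-t$ buys.

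However, the step you yourself identify as the crux is wrong as written, in two ways. (a) The linear system is attached to the wrong pair of codes. The argument must run through the dual pair $\left((C^{\perp})_T,\ C^T\right)$: the shortened dual $(C^{\perp})_T$ has at most $s$ nonzero weights, all drawn from the fixed, $T$-independent list of weights of $C^{\perp}$ lying in $(0,n-t]$, and its dual is the \emph{punctured} code $C^T$, whose minimum weight is at least $d-t$; that bound is what makes the power moments of orders $0,1,\dots,d-t-1$ computable from $T$-independent data, and $s\le d-t$ then supplies at least $s$ equations for the $s$ unknowns, with a nonsingular Vandermonde matrix in the distinct weights. You instead invoke $d(C_T)\ge d$ --- the shortened code of $C$, whose dual $(C^{\perp})^T$ is a code over which you have no weight control --- together with ``moments of orders $0,\dots,t$''; with those ingredients the number of usable equations is not governed by $s\le d-t$ at all, and your count of unknowns oscillates between $s$ and the roughly $s(t+1)$ refined intersection counts. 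You name the correct threshold but never construct the system in which it is the threshold; moreover, even once the shortened-dual distribution is pinned down, part (i) still needs a further step (Huffman--Pless run an induction on the weight $i$, peeling lower-weight contributions out of the punctured code), which your sketch hides inside the unproven claim that the full split enumerator is $T$-independent. (b) The role of $w$ and $w^{\perp}$ is not ``where the weight classes are nonempty and the moment argument applies'': nonemptiness is the separate hypothesis $A_i\ne 0$, and the moment argument never sees $w$. The $w$-condition is a pigeonhole lemma: if $i\le w$ and two weight-$i$ codewords $c_1,c_2$ have equal supports, then for some scalar $\alpha$ the word $c_1-\alpha c_2$ vanishes on at least $\lceil i/(q-1)\rceil$ positions of that support, hence has weight at most $i-\lceil i/(q-1)\rceil<d$, forcing $c_1=\alpha c_2$. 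This is what upgrades codeword counts --- which by themselves yield only $t$-designs with repeated blocks, exactly the weaker conclusion of Theorem~\ref{Larkt} --- into the simple designs required by Definition~\ref{def3}, whose blocks are distinct subsets. Without it, neither (i) nor (ii) as stated is reached.
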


A $t$-$(v,k,\lambda)$ design is also an $i$-$(v,k,\lambda_i)$ design for $0\leq i\leq t$. We can get $\lambda_i$ by the formula \cite{Huf}:
\begin{equation}\label{eq2}
\lambda_i=\lambda \binom{v-i}{t-i}/\binom{k-i}{t-i}.
\end{equation}

Let $W_{\C}=\sum A_i y^i$.  Let $D_i$ denote the 1-design formed from the vectors of weight $i$, and $\lambda_s(D_i)$  denote $\lambda_s$ for that particular design. Here $\lambda_s$ denotes the number of blocks that are incident with a given $s$-tuple of points  for $s\le t$.
\begin{corollary}\em{(\cite{Dou})}\label{corollary1}
\it{The access groups in the secret sharing scheme based on a binary self-dual code $\C$  have  the following size distribution generating function when groups of each size form a 1-design :}
\begin{equation}
 \sum_i \lambda_1(D_i) y^{i-1}.
\end{equation}
\end{corollary}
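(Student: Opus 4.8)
The plan is to turn the combinatorial claim into a count of codewords and then read off the generating function term by term. Since $\mathcal{C}$ is self-dual we have $\mathcal{C}^{\perp}=\mathcal{C}$, so the supporting codewords guaranteed by Lemma~\ref{lemma1} lie in $\mathcal{C}$ itself. First I would record the basic correspondence: by Lemma~\ref{lemma1}, a codeword $\mathbf{c}\in\mathcal{C}$ with $c_0=1$ yields the access group $supp(\mathbf{c})\setminus\{0\}$, a set of $wt(\mathbf{c})-1$ participants, and conversely every access group counted here arises in this way from such a codeword. Thus a codeword of weight $i$ with $c_0=1$ contributes exactly one access group of size $i-1$.

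Next I would set up the bijection used for counting. Because the code is binary, each codeword is the indicator vector of its support, so distinct codewords have distinct supports; consequently the weight-$i$ codewords with $c_0=1$ correspond bijectively to the access groups of size $i-1$. This is the step where $q=2$ is used essentially: over a larger field a codeword and its nonzero scalar multiples share the same support, and one would have to quotient by scalars, whereas over $GF(2)$ no such ambiguity occurs.

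Then I would bring in the $1$-design hypothesis. By assumption the supports of the weight-$i$ codewords are the blocks of a $1$-design $D_i$ on the $n$ coordinates, so by the definition of a $1$-design every point lies in exactly $\lambda_1(D_i)$ blocks. Specializing to the distinguished point $0$ shows that the number of weight-$i$ codewords with $c_0=1$ equals $\lambda_1(D_i)$. Combined with the bijection of the previous paragraph, the number of access groups of size $i-1$ is therefore $\lambda_1(D_i)$.

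Finally I would assemble the generating function: encoding each access group of size $m$ by the monomial $y^{m}$ and summing over all groups, grouped by size, gives $\sum_i \lambda_1(D_i)\,y^{i-1}$, since there are precisely $\lambda_1(D_i)$ groups of size $i-1$. I expect the main obstacle to be conceptual rather than computational, namely pinning down the identification of an access group of size $i-1$ with a weight-$i$ codeword meeting the coordinate $0$, and verifying that the design parameter $\lambda_1(D_i)$ counts exactly the blocks through that fixed point; once these identifications are secured, the stated generating function follows by bookkeeping.
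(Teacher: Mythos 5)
Your proposal is correct and takes essentially the same route the paper intends: the paper states Corollary~\ref{corollary1} without proof (quoting \cite{Dou}), but your argument --- Lemma~\ref{lemma1} together with $\mathcal{C}^{\perp}=\mathcal{C}$ identifying the counted access groups with supports (minus coordinate $0$) of codewords having first coordinate $1$, the binary alphabet making codeword $\leftrightarrow$ support a bijection, and the $1$-design property specialized at the point $0$ giving exactly $\lambda_1(D_i)$ groups of size $i-1$ --- is precisely the reasoning that the paper's setup and its Golay-code example instantiate. No gaps; nothing further is needed.
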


Now let us consider the accessibility of an access structure. Let  $P = \{P_1,\ldots, P_m\}$ be a set of $m$ participants and let $\mathcal{A}_P$ be the set of all access structures on  $P$.
\begin{definition}(\cite{Car})\label{def4}
\rm{The $accessibility~ index$ on $P$ is the map $\delta_{P} : \mathcal{A}_P  \longrightarrow \mathbb{R}$   given by}
\begin{equation*}
\delta_{P}(\Gamma)=\frac{|\Gamma|}{2^m}~\textrm{for}~\Gamma \in \mathcal{A}_P
\end{equation*}
\rm{where $m=|P|$. The number $\delta_{P}(\Gamma)$  will be called the $accessibility~ degree$ of structure $\Gamma$.}
\end{definition}

$\delta_{P}(\Gamma)$ may be interpreted as the probability of a random coalition in $P$  to be
authorized when each participant has a probability $1/2$ to belong to it. As it is
obvious, $\delta_{P}(\Gamma)=0$ iff $\Gamma=\emptyset$.  Otherwise, $0 < \delta_{P}(\Gamma) < 1$, and $|\Gamma| < |\Gamma '|$ implies $\delta_{P}(\Gamma)< \delta_{P}(\Gamma ')$.

Since supports of each weight in a code holding a 1-design determine the size of the access structure $\Gamma$, the accessibility degree of $\Gamma$  for the SSS based on the code can be defined as follows   :
\begin{equation*}
\delta_{P}(\Gamma)=\frac{1}{2^m}\sum_{i}\lambda_1(D_i).
\end{equation*}

\begin{remark}
The Gleason-Pierce-Ward Theorem \cite{Huf} provides the main motivation for studying self-dual
codes over $GF(2)$, $GF(3)$, and $GF(4)$ since these codes have the property that they are divisible. When a code C is divisible by $c >1$, it  implies that all  codewords have weights divisible
by an integer $c$, which is called a divisor of C.
\end{remark}

\begin{example}
Here we introduce an SSS from
$[24, 12, 8]$ Golay code as given in \cite{Dou}. The weight enumerator of the length 24 Golay code is:
\begin{equation}
1+759y^8+2576y^{12}+759y^{16}+y^{24}.
\end{equation}
Note  that the supports of any nonzero
weight of the $[24,12,8]$ Golay code form a $5$-design. It is  easy to calculate for $\lambda_1(D_8)=253$, $\lambda_1(D_{12})=1288$, and $\lambda_1(D_{16})=506$.
These groups together with the entire group, comprise
the 2048 elements of the access structure. Each of the 253
groups of size 8 must be in the minimal access structure.
Additionally, each of the 1288 groups of size 12 must be
in the minimal access structure because if the support of a weight 8
vector were a subset of the support of a weight 12 vector
then the sum of these vectors would have weight 4, which is
a contradiction. The group of size 24 is not in the
minimal access structure. We note that no weight 16 vector
can have a support containing the support of weight 12 vector
since it would produce a weight 4 vector in the code, which is a
contradiction. There are 253 weight
16 vectors whose support cannot be in the minimal access
structure and 253 that are in the minimal access structure.
This gives the following.

\begin{theorem}\label{theorem2}
\it{\em{(\cite{Dou})} In the secret sharing scheme produced from
the extended Golay code we have the following  :}
\begin{itemize}
\item[$\bullet$] \it{The access structure consists of 253 groups of size 7,
1288 groups of size 11, 506 groups of size 15 and 1
group of size 23.}
\item[$\bullet$] \it{The minimal access structure consists of the 253 groups
of size 7, the 1288 groups of size 11, and 253 groups of
size 15.}
\item[$\bullet$] \it{No group of size less than 7 can determine the secret.}
\end{itemize}
\end{theorem}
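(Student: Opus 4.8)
The plan is to read off both structures directly from Lemma~\ref{lemma1} together with the self-duality of the Golay code. Since the $[24,12,8]$ code $C$ is self-dual we have $C^{\perp}=C$, so by Lemma~\ref{lemma1} an access group corresponds to a codeword of $C$ whose coordinate in position $0$ equals $1$, and such a codeword of weight $w$ yields an access group of size $w-1$ (its support with the coordinate $0$ deleted). Because the supports of the weight-$i$ words form a $1$-design $D_i$, the number of weight-$i$ words through the fixed coordinate $0$ is $\lambda_1(D_i)=A_i\, i/24$, giving $253,\,1288,\,506,\,1$ for $i=8,12,16,24$. This already yields the first and third bullets: the access structure has $253,\,1288,\,506,\,1$ groups of sizes $7,11,15,23$, and since the minimum weight is $8$ no word through coordinate $0$ has weight below $8$, so no group of size less than $7$ can determine the secret.

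For the minimal access structure I would use a single observation: if two codewords $\BF{c},\BF{c}'$ both have a $1$ in coordinate $0$ and $supp(\BF{c}')\subseteq supp(\BF{c})$, then $\BF{c}+\BF{c}'$ is a codeword of weight $wt(\BF{c})-wt(\BF{c}')$, which must be $0$ or at least $8$. The size-$7$ groups are automatically minimal, since a proper nonzero sub-support would force a codeword of weight below $8$. For the size-$11$ groups, a contained weight-$8$ support would force a weight-$4$ codeword, so all $1288$ are minimal (and no weight-$12$ word can sit properly inside another for size reasons). The single size-$23$ group, coming from the all-ones word, is not minimal, as it contains the support of every octad through coordinate $0$.

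The delicate case, and the main obstacle, is the size-$15$ groups. Using $\BF{1}\in C$, the weight-$16$ words are exactly the complements of the octads, and the $506$ passing through coordinate $0$ correspond bijectively to the $506$ octads $O$ with $0\notin O$; the support of such a word is the $16$-set $\overline{O}$. A weight-$12$ support cannot lie inside $\overline{O}$ (it would again produce a weight-$4$ word), so the associated size-$15$ group is non-minimal precisely when $\overline{O}$ contains the support of some octad through coordinate $0$. The plan is therefore to count the octads $O$ (with $0\notin O$) whose complement carries no octad through $0$. This is a purely design-theoretic count governed by how octads meet a fixed octad: the octads disjoint from $O$ are exactly the octads contained in $\overline{O}$, and one must track which of them pass through the fixed point $0$.

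Carrying out this count over the intersection distribution of the Steiner system $S(5,8,24)$ — equivalently, understanding the substructure cut out on the $16$ points of $\overline{O}$ — is the step I expect to require the most care, and it is what pins down how many of the $506$ size-$15$ groups survive in $\bar{\Gamma}$. I would attack it by fixing one octad, recording the numbers of octads meeting it in $0,2,4$ points, and then restricting attention to those disjoint octads that are incident to coordinate $0$; combining this with the complementation bijection should isolate exactly which weight-$16$ words are minimal and yield the claimed size distribution of $\bar{\Gamma}$.
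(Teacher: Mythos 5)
Your handling of everything except the size-$15$ groups is correct and essentially the same as the paper's own argument: both you and the paper use self-duality together with Lemma~\ref{lemma1} to identify access groups with supports (minus coordinate $0$) of codewords having a $1$ in coordinate $0$, both obtain the counts $253, 1288, 506, 1$ from $\lambda_1(D_i)=iA_i/24$ via the design property of the Golay code, and both settle the sizes $7$, $11$ and $23$ by the same minimum-distance argument (a contained support would force a codeword of weight $4$, and the all-ones word visibly contains octad supports). So the first and third bullets, and most of the second, are fine.

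The genuine gap is exactly where you say it is: the claim that $253$ of the $506$ size-$15$ groups are minimal. You correctly reduce it to a count in $S(5,8,24)$ --- a size-$15$ group coming from the complement of an octad $O$ with $0\notin O$ is non-minimal precisely when some octad through $0$ is disjoint from $O$ --- but you never carry out this count, and it is the only non-routine content of the theorem. Moreover, executing your own plan does not produce $253$: in $S(5,8,24)$ every octad is disjoint from exactly $30$ octads, and these $30$ octads form the hyperplanes of an $AG(4,2)$ structure on the $16$-point complement, i.e.\ a $3$-$(16,8,3)$ design, whose $\lambda_1$ is $3\binom{15}{2}/\binom{7}{2}=15$. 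Thus \emph{every} point of the complement of $O$, in particular the point $0$, lies on $15$ octads disjoint from $O$, so every one of the $506$ size-$15$ groups contains (fifteen) size-$7$ groups and your reduction would conclude that \emph{no} size-$15$ group is minimal, not $253$ of them. Be aware that the paper itself does not derive the $253$/$253$ split either; its example simply asserts it, citing \cite{Dou}. So your route, while sound as a reduction, cannot be completed into a proof of the second bullet as stated --- completing the count honestly instead exposes a discrepancy with the quoted claim, which is worth flagging rather than papering over.
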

\end{example}

\section{SSSs based on additive codes over $GF(4)$}

\subsection{Introduction to additive codes over $GF(4)$}

An {\em additive code \C\ over $GF(4)$ of length $n$} is
an additive subgroup of $GF(4)^n$ (see \cite{Lark} for details). Since \C\ is a vector
space over $GF(2)$, it has a basis consisting of $k~( 0 \le k \le 2n)$
vectors whose entries are in $GF(4)$. We call \C\ an $(n, 2^k)$ code.
A {\em generator matrix} of \C\ is a $k \times n$ matrix with
entries in $GF(4)$ whose rows are a basis of \C.\
The {\em weight}  of {\bf c}, denoted as  wt({\bf c}), in \C\ is the number of
nonzero components of {\bf c}. The minimum weight $d$ of \C\
is the smallest weight of any nonzero codeword in \C.\ If \C\
is an $(n,2^k)$ additive code of minimum weight $d$, \C\ is
called an $(n,2^k,d)$ code.
In order to define an inner product on additive codes
we define the {\em trace} map,
i.e., for $x$ in $GF(4)$, $\Tr (x)=x+x^2 \in GF(2)$.
We now define the {\em trace inner product} of two vectors
${\bf x}=(x_1x_2\cdots x_n)$ and ${\bf y}=(y_1y_2\cdots y_n)$ in $GF(4)^n$
to be
\[{\bf x}\star{\bf y}=\sum_{i=1}^n\Tr(x_i\overline{y_i}) \in GF(2),\]
where $\overline{y_i}$ denotes the conjugate of $y_i$.
Note that $\Tr(x_i \overline{y_i})=1$ if and only if
$x_i$ and $y_i$ are nonzero distinct elements in $GF(4)$.

If \C\ is an additive code, its {\em dual}, denoted by \Cperb, is the
additive code $\{ {\bf x} \in GF(4)^n \mid {\bf x} \star {\bf c}=0
\mbox{ for all }{\bf c} \in \C\}$. If \C\ is an $(n,2^k)$ code,
then \Cperb\ is an $(n,2^{2n-k})$ code. As usual, \C\ is called
{\em self-dual} if $\C = \Cperb$. We note that if \C\ is self-dual,
\C\ is an $(n,2^n)$ code.

\subsection{SSSs based on additive codes over $GF(4)$}
\ Let $G=(\bold{g}_0, \bold{g}_1,\cdots, \bold{g}_{n-1})$ be a generator matrix of an $(n,2^k)$ code over $GF(4)$, where $\bold{g}_i$ denotes the generic column of $G$. We assume that none of $\bold{g}_i$'s is the zero vector.  In an SSS constructed from an  $(n,2^k)$ code $C$, the secret is an element of $GF(4)$, and $n-1$ participants $P_1, P_2, \cdots, P_{n-1}$ and a dealer $P_0$ are involved. To compute the shares with respect to the secret $s$, the dealer randomly takes an element $\bold{u}$=$(u_0,u_1,\cdots, u_{k-1})\in GF(2)^k$ such that $s=\bold{ug}_0$. There are altogether $2^{k-2}$ vectors $\bold{u} \in GF(2)^k$ if $s$ is in $GF(4)$, or $2^{k-1}$ if $s$ an element in $\{0,1\}$, $\{0, \om\}$, or $\{0, \ob\}$. The dealer then treats $\bold{u}$ as an information vector and computes the corresponding codeword
\begin{center}
$\bold{t}=(t_0,t_1,\cdots,t_{n-1})$=$\bold{u}G$
\end{center}
and the dealer gives the share $t_i$ to participant $P_i$ as share for each $i\geq 1$.\\

\begin{lemma}\label{lem2}   Let $\C$ be an $(n,2^k)$ code over $GF(4)$ and  $\C^{\perp}$  its dual code defined by the trace inner product. Let
\begin{equation}\label{eqlem}
\begin{split}
&H_1=\left\{x|x=(1, \cdots ,0,x_{i_1},0, \cdots ,0,x_{i_m},0, \cdots ,0 )\in\Cperb \right. \\
 &\hspace{4cm}\left. \textrm{ with $x_0=1$, $x_{i_j} \neq 0$ for at least one $j$} \right\},\\
 &H_2=\left\{y|y=(\om, \cdots ,0,y_{i_1},0, \cdots ,0,y_{i_l},0, \cdots ,0 )\in\Cperb \right. \\
&\hspace{4cm}\left. \textrm{ with $y_0=\om$, $y_{i_j} \neq 0$ for at least one $j$} \right\}, \\
 &H_3=\left\{z|z=(\ob, \cdots ,0,z_{i_1},0, \cdots ,0,z_{i_r},0, \cdots ,0 )\in\Cperb\right. \\
&\hspace{4cm}\left. \textrm{ with $z_0=\ob$, $z_{i_j} \neq 0$ for at least one $j$} \right\}.
\end{split}
\end{equation}

In the secret sharing scheme based on $C$,  two subsets of shares $\{t_{i_1}, t_{i_2},\cdots ,t_{i_m}\}$ and  $\{t_{i_1}, t_{i_2},\cdots ,t_{i_l}\}$,  for  $1\leq i_1 < \cdots < i_m\leq n-1$ and $1\leq i_1 < \cdots < i_l\leq n-1$, determine the secret if and only if there are at least two codewords from distinct sets among $H_i$\rq{}s, $ 1\leq i \leq 3$.
\end{lemma}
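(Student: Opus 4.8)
The plan is to recast ``the shares determine the secret'' as a statement about which $GF(2)$-linear functionals of $s$ can be read off from the shares, and then to match those functionals with the three sets $H_1,H_2,H_3$. Write $T$ for the union of the two index sets carrying the given shares, and recall that a codeword is $\mathbf{t}=\mathbf{u}G$ with $t_0=s$. Two vectors $\mathbf{u},\mathbf{u}'$ produce the same shares on $T$ precisely when $\mathbf{u}G-\mathbf{u}'G$ vanishes on $T$, so by $GF(2)$-linearity of $\mathbf{u}\mapsto\mathbf{u}G$ the shares determine $s=t_0$ if and only if every codeword $\mathbf{t}\in\C$ with $t_i=0$ for all $i\in T$ also has $t_0=0$; that is, $\ker\pi_T\subseteq\ker\pi_0$, where $\pi_T$ and $\pi_0$ denote the projections of $\C$ onto the coordinates in $T$ and onto coordinate $0$, respectively.

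The conceptual crux is the behaviour of the three functionals $f_\lambda(\mathbf{t})=\Tr(\lambda\overline{t_0})$ for $\lambda\in\{1,\omega,\overline{\omega}\}$. These are exactly the three nonzero $GF(2)$-linear functionals on $GF(4)$; since $1+\omega=\overline{\omega}$ one has $f_1+f_\omega=f_{\overline{\omega}}$, so they are pairwise independent but satisfy this single relation. Consequently $t_0=0$ holds iff any two of them vanish, and a direct check (for instance evaluating $f_1$ and $f_\omega$ at $0,1,\omega,\overline{\omega}$) shows that any two of them already separate all four elements of $GF(4)$ while no single one does. This is exactly why one relation never suffices and two steps are needed. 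I would record this as the algebraic lemma underlying the statement: $\ker\pi_0=\ker(f_1|_{\C})\cap\ker(f_\omega|_{\C})$, so that $\ker\pi_T\subseteq\ker\pi_0$ holds if and only if at least two of $f_1,f_\omega,f_{\overline{\omega}}$ vanish on $\ker\pi_T$.

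For the forward (sufficiency) direction I would argue explicitly. A codeword $\mathbf{x}\in H_1$ satisfies $\mathbf{x}\star\mathbf{t}=0$, i.e. $\Tr(\overline{t_0})+\sum_j\Tr(x_{i_j}\overline{t_{i_j}})=0$; since $\Tr(\overline{s})=\Tr(s)$, this expresses $f_1(s)$ as a known combination of the shares. A codeword in $H_2$ (first coordinate $\omega$) likewise yields $f_\omega(s)$, and one from $H_3$ yields $f_{\overline{\omega}}(s)$. Two codewords from distinct $H_i$'s therefore furnish two of these three functionals, and by the algebraic fact above their values pin down $s$ uniquely, which also exhibits the two-step recovery.

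The main obstacle is the converse, where the nondegeneracy of the trace form does the work. Assuming the shares determine $s$, the reformulation gives $\ker\pi_T\subseteq\ker(f_\lambda|_{\C})$ for at least two values $\lambda$. I would translate each such inclusion into a dual codeword by annihilator duality: $f_\lambda|_{\C}$ vanishes on $\ker\pi_T$ iff it lies in the span of the coordinate-trace functionals $\mathbf{t}\mapsto\Tr(\mu\overline{t_i})$ with $i\in T$, and because the trace form on $GF(4)^n$ is nondegenerate over $GF(2)$, every $GF(2)$-functional on $\C$ is the trace pairing with some vector of $GF(4)^n$. Hence the condition becomes the existence of $\mathbf{x}$ with $x_0=\lambda$, off-$0$ support contained in $T$, and $\mathbf{x}\star\mathbf{c}=0$ for all $\mathbf{c}\in\C$ -- that is, $\mathbf{x}\in\Cperb$ lying in $H_1$, $H_2$, or $H_3$ according to $\lambda$. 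Two admissible $\lambda$ then give two codewords from distinct $H_i$'s, completing the equivalence. I expect the bookkeeping here -- fixing the entry in coordinate $0$, confining the remaining support to $T$, and checking that at least one share coordinate is nonzero (otherwise $f_\lambda$ would vanish on all of $\C$, a degenerate case) -- to be the only delicate point; everything else is the linear algebra of $GF(4)$ over $GF(2)$.
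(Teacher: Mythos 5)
Your proposal is correct, and its core coincides with the paper's own proof: your three functionals $f_{\lambda}(t_0)=\mathrm{Tr}(\lambda\overline{t_0})$, $\lambda\in\{1,\omega,\overline{\omega}\}$, are exactly the paper's quantities $\alpha_1,\alpha_2,\alpha_3$; your observation that any two of them separate the four elements of $GF(4)$ while no single one does is precisely the paper's Table 1; and your sufficiency argument (a codeword of $H_i$ turns the trace inner product into an expression of one functional of $s$ in terms of the pooled shares) is the paper's $(\Leftarrow)$ direction almost verbatim. Where you genuinely depart from the paper is the converse. The paper stays with the generator matrix: it rewrites each $\alpha_i$ as $\mathbf{u}\sum_j\bigl(x_j\mathbf{g}_{i_j}+\overline{x}_j\mathbf{g}_{i_j}^{2}\bigr)$ and then \emph{asserts}, without proof, that two of the $\alpha_i$ are computable from the shares if and only if the column identities $\mathbf{g}_0+\mathbf{g}_0^2=\sum_j\bigl(x_j\mathbf{g}_{i_j}+\overline{x}_j\mathbf{g}_{i_j}^{2}\bigr)$ (and their $\omega$, $\overline{\omega}$ analogues) are solvable, reading the dual codewords off from the solutions $x_j$. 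You instead formalize ``the shares determine $s$'' as the kernel inclusion $\ker\pi_T\subseteq\ker\pi_0$ and manufacture the dual codewords from annihilator duality plus nondegeneracy of the trace pairing. This is the same mathematics---your span condition \emph{is} the paper's column identity, since $\mathrm{Tr}(\mu\overline{t_i})=\mathbf{u}(\overline{\mu}\mathbf{g}_i+\mu\mathbf{g}_i^2)$---but your route actually proves the equivalence the paper only asserts, so it is tighter; what the paper's concrete computation buys in exchange is the explicit two-step recovery algorithm (solve for the $x_j$, evaluate two $\alpha_i$, consult the table), which is the operational point the authors want to emphasize.

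One loose end, which you flag but set aside as ``degenerate'': if the representing vector for $f_{\lambda}$ is zero off coordinate $0$, i.e.\ $(\lambda,0,\dots,0)\in\mathcal{C}^{\perp}$, it does not lie in $H_{\lambda}$ (membership requires a nonzero $x_{i_j}$), and the lemma's hypotheses do not exclude this situation---the paper's setup explicitly allows the secret to range over a two-element subgroup such as $\{0,1\}$. The case can be closed rather than excluded. Since $\mathbf{g}_0\neq\mathbf{0}$, not all three vectors $(\lambda,0,\dots,0)$ can lie in $\mathcal{C}^{\perp}$ (that would force $t_0=0$ for every codeword), and because any two of them sum to the third, at most one does. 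If exactly one does, duality still gives, for $\mu\neq\lambda$, a dual vector with first coordinate $\mu$ and support in $T\cup\{0\}$; it must have a nonzero entry in $T$ (else two, hence all three, degenerate vectors would lie in $\mathcal{C}^{\perp}$), so it lies in $H_{\mu}$, and adding $(\lambda,0,\dots,0)$ to it produces a codeword of $H_{\lambda+\mu}$ with the same support off coordinate $0$. Thus two codewords from distinct $H_i$'s still exist and the conclusion survives. The paper's proof is silent on this case, so you are no worse off, but since you noticed it you should close it.
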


\begin{proof}
$(\Leftarrow)$  Suppose there are at least two codewords in $\C^{\perp}$ as in (\ref{eqlem}). Then we get two of the three equations from trace inner product as follows:

\begin{equation}\label{pf1}
\begin{split}
&(t_{0}+t_{0}^{2})+(t_{i_1}\bar{x_1}+(t_{i_1}\bar{x}_{1})^{2})+\ldots+(t_{i_m}\bar{x_m}+(t_{i_m}\bar{x_m})^{2})=0,\\
&(t_{0}\ob+(t_{0}\ob)^{2})+(t_{i_1}\bar{y_1}+(t_{i_1}\bar{y_1})^{2})+\ldots+(t_{i_l}\bar{y_l}+(t_{i_l}\bar{y_l})^{2})=0,\\
&(t_{0}\om+(t_{0}\om)^{2})+(t_{i_1}\bar{z_1}+(t_{i_1}\bar{z_1})^{2})+\ldots+(t_{i_r}\bar{z_r}+(t_{i_r}\bar{z_r})^{2})=0.\\
\end{split}
\end{equation}

Since $s=ug_{0}=t_{0}$, the equation can be rewritten as
\begin{equation*}
\begin{split}
&s+s^{2}= \sum_{j=1}^{m}(t_{i_j}\bar{x_j}+(t_{i_j}\bar{x}_{j})^{2})\in GF(2),\\
&s\ob+(s\ob)^{2}= \sum_{j=1}^{l}(t_{i_j}\bar{x_j}+(t_{i_j}\bar{x}_{j})^{2})\in GF(2),\\
&s\om+(s\om)^{2}= \sum_{j=1}^{l}(t_{i_j}\bar{x_j}+(t_{i_j}\bar{x}_{j})^{2})\in GF(2).\\
\end{split}
\end{equation*}

Let $\alpha_1=s+s^{2}$, $\alpha_2=s\ob+(s\ob)^{2}$,  and $\alpha_3=s\om+(s\om)^{2}$.
Now the secret $s$ can be recovered using two values of $\alpha_i$'s based on Table 1.  For example, if $\alpha_1=0$ and $\alpha_2=1$, then the secret $s$ is uniquely determined as 1.


As we can see in Table 1, we do not need all the three values of $\alpha_i$'s since two values of $\alpha_i$'s are sufficient to determine the secret $s$.  Now we can say we recover the secret $s$  with two values of $\alpha_i$'s, where $i=1,2$, or $3$.

\begin{table}[h]\label{t1}
\caption{Recovering the secret $s$ from $\alpha_i$'s}
\centering
\begin{tabular}[c]{|P{2.5cm}P{2.5cm}P{2.5cm}|P{1.5cm}|}
\hline
$\alpha_1=s+s^{2}$&$\alpha_2=s\ob+(s\ob)^{2}$&$\alpha_3=s\om+(s\om)^{2}$&$s$\\ \hline
0&0&0&0\\
0&1&1&1\\
1&0&1&\om \\
1&1&0&\ob \\\hline
\end{tabular}
\end{table}

\newpage
\noindent
$(\Rightarrow)$  Suppose there are two subsets of shares $\{t_{i_1}, t_{i_2},\cdots ,t_{i_m}\}$ and  $\{t_{i_1}, t_{i_2},\cdots ,t_{i_l}\}$,  for  $1\leq i_1 < \cdots < i_m\leq n-1$ and $1\leq i_1 < \cdots < i_l\leq n-1$, that determine the secret $s$.

First, we note the following: 

\begin{equation}
\begin{split}
(\bold{ug}_i)^2&=((u_0,u_1,\cdots, u_{k-1})(g_{0i},g_{1i}, \ldots, g_{(k-1),i})^{T})^2\\
&=(u_0g_{0i}+u_1g_{1i}+\ldots +u_{k-1}g_{(k-1),i})^2\\
&=u_0^2g_{0i}^2+u_1^2g_{1i}^2+\ldots +u_{k-1}^2g_{(k-1),i}^2\\
&=(u_0^2+u_1^2+\ldots +u_{k-1}^2)(g_{0i}^2+g_{1i}^2+\ldots +g_{(k-1),i}^2)\\
&=(u_0^2,u_1^2,\ldots ,u_{k-1}^2)(g_{0i}^2,g_{1i}^2, \ldots, g_{(k-1),i}^2)^{T}\\
&=(u_0,u_1,\cdots, u_{k-1})(g_{0i}^2,g_{1i}^2, \ldots, g_{(k-1),i}^2)^{T}\\
\end{split}
\end{equation}

Here  $\bold{u}=(u_0,u_1,\cdots, u_{k-1})=(u_0^2,u_1^2,\ldots ,u_{k-1}^2)$ since $\bold{u}\in GF(2)^k$. Letting $\bold{g}_i^2=(g_{0i}^2,g_{1i}^2, \ldots, g_{(k-1),i}^2)^{T}$ for convenience, we have    $(\bold{ug}_i)^2=\bold{ug}_i^2$.

Now we can rewrite $\alpha_i$'s in the following way :
\begin{equation*}
\begin{split}
&\alpha_1=\sum_{j=1}^{m} \left(t_{i_j}\bar{x_j}+( t_{i_j}\bar{x_j})^2\right)=\bold{u}\sum_{j=1}^{m}(x_j\bold{g}_{i_j}+\bar{x_j}\bold{g}_{i_j}^2),\\
&\alpha_2=\sum_{j=1}^{l} \left(t_{i_j}\bar{y_j}+( t_{i_j}\bar{y_j})^2\right)=\bold{u}\sum_{j=1}^{l}(x_j\bold{g}_{i_j}+\bar{x_j}\bold{g}_{i_j}^2),\\
&\alpha_3=\sum_{j=1}^{r} \left(t_{i_j}\bar{z_j}+( t_{i_j}\bar{z_j})^2\right)=\bold{u}\sum_{j=1}^{r}(x_j\bold{g}_{i_j}+\bar{x_j}\bold{g}_{i_j}^2).\\
\end{split}
\end{equation*}

We can determine two of the values of $\alpha_i$'s, $1\leq i \leq 3$ by the two sets of shares, $\{t_{i_1}, t_{i_2},\cdots ,t_{i_m}\}$ and  $\{t_{i_1}, t_{i_2},\cdots ,t_{i_l}\}$,  for  $1\leq i_1 < \cdots < i_m\leq n-1$,  $1\leq i_1 < \cdots < i_l\leq n-1$,  if and only if
\begin{equation}
\begin{split}
\bold{g}_0+\bold{g}_0^2=\sum_{j=1}^{m}(x_j\bold{g}_{i_j}+\bar{x_j}\bold{g}_{i_j}^2), ~~&\bar{\omega}\bold{g}_0+\omega\bold{g}_0^2=\sum_{j=1}^{l}(x_j\bold{g}_{i_j}+\bar{x_j}\bold{g}_{i_j}^2),\\
& \omega\bold{g}_0+\bar{\omega}\bold{g}_0^2=\sum_{j=1}^{r}(x_j\bold{g}_{i_j}+\bar{x_j}\bold{g}_{i_j}^2).
\end{split}
\end{equation}

We can find $x_j$'s by solving the linear equations and get two  values of $\alpha_i$'s. Using these two values of $\alpha_i$'s we can recover the secret $s$ by Table 1.
Hence  there exist at least two
codewords from distinct sets among $H_i$'s, $1\leq i \leq 3$ if we recover the secret $s$ using two subsets of shares, $\{t_{i_1}, t_{i_2},\cdots ,t_{i_m}\}$ and  $\{t_{i_1}, t_{i_2},\cdots ,t_{i_l}\}$.  \qed

\end{proof}

Now we need to define an access group and an access structure for SSS based on additive codes over $GF(4)$.



\noindent
Let
\begin{equation*}
\begin{split}
&\Go=\{\textrm{the set of supports for $x\in H_1$ excluding 1 from each support}\} , \\
&\Gt=\{\textrm{the set of supports for $y\in H_2$ excluding 1 from each support}\}, \\
 &\Gtt=\{\textrm{the set of supports for $z\in H_3$ excluding 1 from each support}\}.\\
\end{split}
\end{equation*}
 The access structure for a linear code based SSS is a set of supports of vectors in $\Cperb$ with $c_0=1$, which is same to $\Go$. The access structures from additive codes over $GF(4)$ are different from those from linear codes. To recover the secret $s$, we need at least two sets among $\Go,~\Gt$, or $\Gtt$  for an access structure. We obtain the values of $\alpha_i$ and $\alpha_j$ from two elements of $\Gamma_{H_i}$ and $\Gamma_{H_j}$, $i \ne j$, respectively. With the two values, we can recover the secret $s$ using the table above.

Since this process  requires at least two steps of calculations to reveal the secret, we call this process as a $2$-$step$ SSS. On the other hand, the previous SSS can be regarded as a $1$-$step$ SSS.

We need the Assmus-Mattson Theorem for additive codes over $GF(4)$ which gives designs with possibly repeated blocks to define our process.

\begin{theorem}\em{(\cite{Lark})}\label{Larkt}
\it{Let $\C$ be an additive $(n, 2^k)$ code over $GF(4)$ with minimum weight $d$. Let $\Cperb$ be its dual $(n, 2^{2n-k})$ code with minimum weight $d'$. Let $0<t<d$. Let $s$ be the number of weights $B_i \neq 0$ in $\Cperb$ where $0<i\leq n-t$. Suppose that $s\leq d-t$. Then the following hold.}
\begin{enumerate}
\item[(i)] \it{For each weight $u~(d\leq u\leq n)$, the set of supports of codewords of weight $u$ in $\C$ holds a $t$-design with possibly repeated blocks.}
\item[(ii)]\it{The set of supports of vectors of weight $w$ in $\Cperb$ where $B_w \neq 0$ and $d' \leq w\leq n-t$ hold a $t$-design with possibly repeated blocks.}
\item[(iii)] \it{The supports of minimum weight vectors are either simple blocks or have repetition number 3.}
\end {enumerate}
\end{theorem}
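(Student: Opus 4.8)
The plan is to transfer the classical Assmus--Mattson argument (Theorem~\ref{theorem1}) to the additive setting, the single enabling fact being that additive codes over $GF(4)$, with duality taken relative to the trace inner product, satisfy a MacWilliams identity of exactly the same shape as linear codes over a field of order $4$. Writing $W_{\C}(x,y)=\sum_{i=0}^{n}A_i\,x^{\,n-i}y^{\,i}$ for the Hamming weight enumerator, one has
\[
W_{\Cperb}(x,y)=\frac{1}{|\C|}\,W_{\C}(x+3y,\;x-y),
\]
the coefficients $(1,3;1,-1)$ arising from the one-variable character sum $\sum_{a\in GF(4)}\chi(\Tr(a\overline{b}))$, which evaluates to $1+3$ for $b=0$ and to $1-1$ for $b\neq0$ because $\Tr$ takes the value $0$ on $\{0,1\}$ and $1$ on $\{\om,\ob\}$. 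Accordingly, the first thing I would record is this identity, together with its companion structural fact that shortening \C\ on a coordinate set again produces an additive code whose trace-dual is the puncturing of \Cperb\ on the same set. These two lemmas are what let each step of the field-theoretic proof run verbatim.

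Next I would recast the two design conclusions as a single constancy statement. The supports of the weight-$u$ codewords of \C\ form a $t$-design with possibly repeated blocks exactly when, for every $t$-subset $T$ of the coordinates, the number $\lambda_u(T)$ of weight-$u$ codewords whose support contains $T$ is independent of $T$; by inclusion--exclusion this is in turn equivalent to the number of weight-$u$ codewords vanishing on $T$ being independent of $T$. So for a fixed $t$-set $T$ I would pass to the shortened code and its trace-dual, the corresponding punctured code of \Cperb, and write down the MacWilliams relation between their Hamming weight enumerators.

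The combinatorial core, which I expect to be the main obstacle, is to pin this shortened weight distribution down independently of the choice of $T$. Here the hypotheses do precisely the work they do in the field case: because $t<d$ the deleted coordinates support no nonzero codeword of \C, so passing to \C\ shortened on $T$ does not disturb the relevant low-weight structure, while the assumption $s\le d-t$ bounds the number of distinct nonzero weights that can appear in the shortened dual. The MacWilliams identity then yields a linear system whose unknowns are these few weight counts and whose coefficient matrix (built from the binomial/Krawtchouk entries) has full rank; its unique solution has entries that are visibly free of any dependence on $T$. Reading this constancy back through the inclusion--exclusion above gives the $t$-design property for the weight-$u$ codewords of \C\ in the asserted range, which is part~(i); running the identical argument with the roles of \C\ and \Cperb\ exchanged --- legitimate since trace-duality is an involution that preserves additivity --- yields part~(ii) for $d'\le w\le n-t$.

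For part~(iii) I would argue directly with the arithmetic of $GF(4)$. Fix a support $S$ with $|S|=d$ carried by a minimum-weight codeword, and let $\BF{c},\BF{c}'\in\C$ be distinct codewords whose support is exactly $S$. If they agreed in even one coordinate of $S$, then $\BF{c}+\BF{c}'$ would be a nonzero codeword of weight strictly below $d$, which is impossible; hence $\BF{c}$ and $\BF{c}'$ differ in every coordinate of $S$, so in each such coordinate they occupy two of the three nonzero symbols $1,\om,\ob$. Since $1+\om=\ob$ and \C\ is closed under addition, the codeword $\BF{c}+\BF{c}'$ again has support exactly $S$ and contributes the third nonzero symbol in every coordinate; consequently the set of minimum-weight codewords sharing the support $S$ has size $1$ or $3$. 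This says precisely that a minimum-weight support occurs as a simple block or with repetition number $3$, which is part~(iii).
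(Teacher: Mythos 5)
The paper itself gives no proof of this theorem: it is imported verbatim from the cited reference [Lark] (Kim--Pless), so your proposal can only be measured against the standard Assmus--Mattson argument that that source adapts, and in outline you have chosen exactly that route (trace-dual MacWilliams identity, shortening/puncturing duality, a small linear system forced by $s\le d-t$). Your proof of part (iii) is correct and complete: two distinct codewords with common support $S$ must differ in every coordinate of $S$ (otherwise their sum is a nonzero codeword of weight less than $d$), their sum is then a third codeword with support exactly $S$, and a fourth is impossible since each coordinate offers only three nonzero symbols.

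The genuine problem is in the step you yourself single out as the combinatorial core: you have paired the wrong codes, and as written the linear system does not close. The hypothesis $s\le d-t$ bounds the nonzero weights of the \emph{shortening of the dual}, $(\Cperb)_T$, whose codewords come from codewords of $\Cperb$ of weight at most $n-t$; and $(\Cperb)_T$ is the trace-dual of the \emph{punctured} code $\C^T$, not of the shortened code $\C_T$. You instead propose to apply MacWilliams to the pair $\C_T$ and $(\Cperb)^T$. For that pair nothing is under control: the weights of the punctured dual $(\Cperb)^T$ have the form $w-i$ with $0\le i\le t$ (and puncturing $\Cperb$ need not even be injective), so they are not bounded in number by $s$; and the size $|\C_T|$, which enters the MacWilliams identity as the normalizing constant, is itself unknown a priori --- it is precisely the total count of codewords of $\C$ avoiding $T$, i.e.\ the quantity whose constancy you are trying to prove. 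The working system is the other one: since $t<d$, puncturing is injective on $\C$, so $|\C^T|=|\C|$ and $A_i(\C^T)=0$ for $0<i<d-t$; its dual $(\Cperb)_T$ has at most $s$ unknown nonzero weight counts; and the Krawtchouk (Vandermonde-type) coefficient matrix then determines these counts independently of $T$. This immediately gives (ii), because the weight-$w$ count of $(\Cperb)_T$ \emph{is} the number of weight-$w$ codewords of $\Cperb$ avoiding $T$, and the hypotheses propagate to all $t'\le t$ so inclusion--exclusion converts avoiding counts into containing counts. Note finally that (i) needs one more idea than your ``read the constancy back through inclusion--exclusion'': what the system gives for $\C$ is constancy of $A_m(\C^T)=\sum_i e(m+i,i,T)$, where $e(u,i,T)$ is the number of weight-$u$ codewords of $\C$ meeting $T$ in exactly $i$ coordinates, and extracting the single term $e(u,t,T)$ requires an induction on $u$ (telescoping through the lower-weight meeting counts, using constancy at all levels $t'\le t$). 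Both repairs are standard --- they are how the classical proof, and the proof in [Lark], actually run --- but your write-up as it stands asserts the crucial constancy from a MacWilliams relation in which neither side is pinned down.
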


\begin{corollary}\em{(\cite{Lark})}\label{Larkc}
\it{Let $n_i:=6m+2(i-1)$ with $m\geq 1$ any integer and $i=1,2,$ or $3$.
Let $\C$ be an extremal additive even self-dual $(n_i, 2^{n_i})$ code over $GF(4)$ with minimum
weight $d=2m+2\geq 6$. Then the vectors of each weight $w$ in $\C$ where $A_w\neq 0$ and
$d\leq w\leq n_i$ hold a $(7-2i)$-design with possibly repeated blocks.}
\end{corollary}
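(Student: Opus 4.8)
The plan is to obtain Corollary~\ref{Larkc} as a direct specialization of the additive Assmus--Mattson Theorem (Theorem~\ref{Larkt}), applied to $\C$ against itself. First I would record the structural facts that feed that theorem: since $\C$ is self-dual we have $\Cperb=\C$, so the dual minimum weight is $d'=d=2m+2$ and the dual weight distribution coincides with that of $\C$ (i.e.\ $B_i=A_i$). Because $\C$ is even, every nonzero weight is even, and for each of the three lengths $n_i=6m+2(i-1)\in\{6m,6m+2,6m+4\}$ one has $\lfloor n_i/6\rfloor=m$, so the given value $d=2m+2$ is exactly the extremal bound; thus the occurring weights lie in $\{0,d,d+2,\dots,n_i\}$.

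Next I would set $t=7-2i$ and verify the two hypotheses of Theorem~\ref{Larkt}. The condition $0<t<d$ is immediate, since $t\le 5$ while $d\ge 6$. The substantive step is the inequality $s\le d-t$, where $s$ is the number of nonzero weights $w$ of $\C$ with $0<w\le n_i-t$. Every such $w$ is even and at least $d=2m+2$, and $n_i-t=6m+4i-9$ is odd, so the largest admissible weight is $6m+4i-10$; hence
\begin{equation*}
s\le \frac{(6m+4i-10)-(2m+2)}{2}+1=2m+2i-5.
\end{equation*}
Since $d-t=(2m+2)-(7-2i)=2m+2i-5$, the required bound $s\le d-t$ holds.

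With both hypotheses in place, part~(i) of Theorem~\ref{Larkt} says that for every weight $w$ with $d\le w\le n_i$ and $A_w\neq 0$, the supports of the weight-$w$ codewords of $\C$ form a $t$-design with possibly repeated blocks; substituting $t=7-2i$ yields exactly the asserted $(7-2i)$-design. I expect the only delicate point to be the verification of $s\le d-t$: it is sharp, holding with equality, so the argument leaves no slack and relies essentially on evenness (which halves the count of occurring weights) together with the extremal value $d=2m+2$. In particular the parity of the endpoint $n_i-t$ must be tracked correctly, since an off-by-one there would destroy the tight inequality --- this, rather than any deep idea, is the crux of the proof.
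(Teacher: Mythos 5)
Your proposal is correct and follows the intended route: this corollary is quoted from \cite{Lark}, where it is derived exactly as you do, by specializing Theorem~\ref{Larkt} to the self-dual case ($\C^{\perp}=\C$, so $B_w=A_w$ and $d'=d$), taking $t=7-2i$, and using evenness of the weights together with $d=2m+2$ to get the tight bound $s\le 2m+2i-5=d-t$. Your arithmetic, including the parity check that $n_i-t=6m+4i-9$ is odd so the top admissible even weight is $6m+4i-10$, is exactly the crux and is carried out correctly.
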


\begin{lemma}\label{lem1}
Let $C$ be an additive even $(n,2^k)$   self-dual code over $GF(4)$. Then the supports of codewords for all non-trivial
weights hold a 1-design with possible repeated blocks if $d \geq \frac{n+2}{3}$.
\end{lemma}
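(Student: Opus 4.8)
The plan is to obtain this directly from the additive Assmus--Mattson Theorem (Theorem \ref{Larkt}) applied with $t=1$. Since $C$ is self-dual we have $C = C^{\perp}$, so $d'=d$ and the two weight distributions coincide; in particular the hypotheses and both conclusions of Theorem \ref{Larkt} refer to one and the same code, and conclusion (i) already asserts that the supports of the codewords of \emph{every} weight $u$ with $d \le u \le n$ form a $1$-design with possibly repeated blocks. Thus the entire task reduces to verifying the single numerical hypothesis of Theorem \ref{Larkt} for $t=1$, namely $s \le d-1$, where $s$ is the number of distinct weights $i$ with $A_i \neq 0$ and $0 < i \le n-1$. (The side condition $0 < t < d$ is automatic, since an even code with a nonzero codeword has $d \ge 2$.)

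To bound $s$ I would use evenness. Because $C$ is even, every nonzero weight is even and at least $d$, so the weights contributing to $s$ lie among the even integers in the interval $[d,\,n-1]$. Here I would invoke the structural fact that an even (Type II) additive self-dual code over $GF(4)$ exists only for even length, so $n$ is even; consequently the largest even integer not exceeding $n-1$ is $n-2$, and the weights counted by $s$ form a subset of $\{d, d+2, \ldots, n-2\}$. This yields the crude but sufficient estimate $s \le \frac{n-d}{2}$.

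The remaining step is the elementary implication $d \ge \frac{n+2}{3} \Rightarrow \frac{n-d}{2} \le d-1$: the hypothesis $d \ge \frac{n+2}{3}$ is equivalent to $3d \ge n+2$, i.e. $n-d \le 2d-2$, i.e. $\frac{n-d}{2} \le d-1$. Combining this with $s \le \frac{n-d}{2}$ gives $s \le d-1 = d-t$, so the hypothesis of Theorem \ref{Larkt} is met with $t=1$, and conclusion (i) gives exactly the desired $1$-design with possibly repeated blocks for the supports of the codewords of each non-trivial weight.

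The one point that requires care --- and the reason the threshold is the sharp value $\frac{n+2}{3}$ rather than $\frac{n+3}{3}$ --- is the parity bookkeeping in the estimate for $s$. Since the Assmus--Mattson count excludes weight $n$ (the range is $i \le n-1$) and $n$ is even, the top even weight that can contribute is $n-2$ rather than $n-1$; this tightens the count from $\frac{n+1-d}{2}$ to $\frac{n-d}{2}$, which is precisely what makes $\frac{n+2}{3}$ sufficient. I would therefore make the evenness of $n$ explicit and confirm that no odd-length even self-dual code can occur, as the bound would otherwise degrade to $d \ge \frac{n+3}{3}$.
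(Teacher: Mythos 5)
Your proof is correct and takes essentially the same route as the paper's: both apply Theorem~\ref{Larkt} with $t=1$ and bound $s$ by the number of even weights in $[d,\,n-2]$, namely $\frac{n-d}{2}$, which the hypothesis $d \geq \frac{n+2}{3}$ makes at most $d-1=d-t$. Your write-up is simply more explicit about the evenness of $n$ and the exclusion of weight $n$ from the count --- points the paper leaves implicit in its statement that there are ``$\frac{n}{2}-1$ possible non-trivial weights,'' of which the $\frac{d}{2}-1$ below the minimum weight are empty.
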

\begin{proof}
An additive $(n,2^k)$  self-dual code over $GF(4)$ has $\frac{n}{2}-1$ possible non-trivial weights. Then $\frac{d}{2}-1$ of these possible weights have no vectors since $d$ is the minimum weight. Since $d\geq \frac{n+2}{3}$,  from this we get the following inequality which satisfies Assmus-Mattson Theorem:
{\small \begin{equation*}
d-1\geq (\frac{n}{2}-1)-(\frac{d}{2}-1).
\end{equation*} }
Thus the supports of codewords for all non-trivial weights hold a 1-design with possibly repeated blocks.
\end{proof}

\subsection{SSSs based on extremal additive even self-dual  codes over $GF(4)$}

Up to now we have introduced a different method of defining access structures based on additive codes. We have shown that the access structures are nicely constructed in this way. However there might be some repeated blocks in these access structures. We, hence, employ the notion of a generalized $t$-design  from \cite{Del} to resolve this issue.

The generalized $t$-design is to count the number of groups in an access structure for an additive code based SSS. We will, first of all, redefine $covering$ an element for the generalized $t$-design \cite{Del}.

\begin{definition}
Let $G=GF(4)$, the set of $n$-tuples of $GF(4)$.
An element ${\bf a}$ of $G$ is said to be $componentwisely$ $covered$  (abbr. $c$-$covered$) by an element ${\bf b}$ of $G$ if each nonzero
component $a_i$ of ${\bf a}$ is equal to the corresponding component $b_i$ of ${\bf b}$; we denote this by ${\bf a}\leq {\bf b}$. For example,  ${\bf a}=(1,1,\om,0)$ is c-covered by ${\bf b}=(1,1,\om,\ob)$.
\end{definition}

\begin{definition}
A subset $S$ of $G$ is called a $generalized\ t$-$design\ of\ type\ q-1$, with parameters $t$-$(n,k,\mu_t),~0 \leq t\leq k\leq n$, $\mu_t\geq 1$, if the following two conditions are satisfied:
\begin{enumerate}
\item[(i)] all elements of $S$ have the same weight $k$,
\item[(ii)] each element of weight $t$ in $G$ is c-covered by a constant number $\mu_t$ of elements
of $S$. If a subset $S$ of $G$ holds a generalized $t$-design of type $q-1$, then it holds
a generalized  $(t-1)$-design of type $q-1$.
\end{enumerate}
\end{definition}

In the binary case ($q=2$), this is the same as a classical $t$-design without repeated
blocks.

For a given code $\C$ of length $n$ and for an element $e$ of $G=GF(4)^n$, we denote by $\mu(p,e)$
the number of codewords of weight $p$ that c-cover $e$ and $\mu_i(p,e)$ the number of codewords of weight $p$ in $\Gamma_{H_i}$ that c-cover $e$. Trivially, if $p< wt(e)$, then $\mu(p,e)=0$ and $\mu_i(p,e)=0$ .

Delsarte's theorem for any finite alphabet is given as follows.
\begin{theorem}\em{(\cite{Del})}\label{Del}
\it{Let $\C$ be a $q-$ary code of dual distance $d'$. Let $t$ be an
integer, $1\leq t\leq d'$, such that the number of weights of $\C$ that are at least equal to $t$ is at
most equal to $d'-t$. Then each set of codewords of a given weight $\geq t$ is a generalized
$t-$design of type $q-1$.}
\end{theorem}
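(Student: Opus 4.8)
The plan is to follow Delsarte's association-scheme strategy, reducing the design property to a single counting invariant and then extracting that invariant's constancy from the weight-gap hypothesis via MacWilliams duality. Fix a weight $w\geq t$ and write $S_w\subseteq GF(q)^n$ for the set of codewords of $\C$ of weight $w$. Unwinding the definition, $S_w$ is a generalized $t$-design of type $q-1$ exactly when, for every element $e\in GF(q)^n$ of weight $t$, the count
\[
N_w(e):=\#\{c\in S_w : e\leq c\}=\#\{c\in\C : wt(c)=w,\ c_i=e_i\ \text{for all } i\in\operatorname{supp}(e)\}
\]
is independent of $e$. So the whole theorem reduces to showing that $N_w(e)$ depends on neither the support $T:=\operatorname{supp}(e)$ (a $t$-subset of coordinates) nor the nonzero values $(e_i)_{i\in T}$ prescribed there; the common value is then the required $\mu_t$.

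First I would recast $N_w(e)$ as a coset weight-distribution problem. The codewords agreeing with $e$ on $T$ form either the empty set or a coset $c_0+\C_0(T)$, where $\C_0(T)=\{c\in\C : c_i=0\ \forall i\in T\}$ is the subcode vanishing on $T$ and $c_0$ is any fixed codeword matching $e$ on $T$. Since each such $c$ already has weight exactly $t$ on $T$ (all the $e_i$ are nonzero), one has $wt(c)=t+wt(c|_{T^c})$, so requiring $wt(c)=w$ means $wt(c|_{T^c})=w-t$. Deleting the $T$-coordinates therefore identifies $N_w(e)$ with the number of weight-$(w-t)$ vectors in a prescribed coset of the shortened-and-punctured code $D_T\subseteq GF(q)^{n-t}$. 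The target quantity is thus a single coefficient of a coset weight enumerator, and the goal becomes: this coset enumerator is the same for all $T$ and all admissible cosets.

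Next I would apply the coset form of the MacWilliams identity. For a code $D$ of length $m=n-t$ with dual $D^\perp$ and a representative $v$,
\[
\#\{x\in v+D : wt(x)=j\}=\frac{1}{|D^\perp|}\sum_{u\in D^\perp}\chi(u\cdot v)\,c_j(wt(u)),
\]
where $\chi$ is a fixed nontrivial additive character of $GF(q)$, the pairing $u\cdot v$ is the standard inner product, and the $c_j(\ell)$ are Krawtchouk numbers depending only on $\ell=wt(u)$. The structural input here is the duality between shortening and puncturing: $D_T^\perp$ is obtained from $\C^\perp$ by puncturing on $T$, so the low-weight words of $D_T^\perp$ come from the low-weight words of $\Cperb$, whose minimum weight is $d'$. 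Consequently the only dependence of $N_w(e)$ on the data $(T,v)$ is carried by the partial character sums $\Phi_\ell(v)=\sum_{u\in D_T^\perp,\,wt(u)=\ell}\chi(u\cdot v)$ over the low-weight words of the punctured dual, with the $\ell=0$ term contributing a fixed constant.

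The hard part, and the genuine content of the theorem, is converting the hypothesis into constancy. Here I would use the assumption that the number of nonzero weights of $\C$ that are $\geq t$ is at most $d'-t$. Translated through the identity above, this weight gap forces the inner distribution of $S_w$ to have vanishing dual distribution in degrees $1,\ldots,t$; equivalently, it makes the Krawtchouk/Vandermonde system that determines the coset weight distribution unisolvent, so its solution is pinned down by global invariants that cannot see $T$ or $v$. Concretely, one shows the $v$-dependent contributions $\Phi_\ell(v)$ either vanish or reassemble into a $v$-independent constant throughout the admissible range of $\ell$, precisely because the punctured dual has too few low weights to support any variation. This unisolvence/vanishing step, requiring the nonbinary Krawtchouk estimates and the full strength of Delsarte's association-scheme machinery, is the step I expect to be the main obstacle; note that the ``type $q-1$'' refinement beyond the classical Assmus--Mattson theorem is exactly the bookkeeping of the phase factors $\chi(u\cdot v)$, which record the prescribed \emph{values} on $T$ rather than merely its support. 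Once every such coset enumerator is shown constant, $N_w(e)$ is independent of $e$, and $S_w$ is a generalized $t$-design of type $q-1$.
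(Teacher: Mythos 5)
The paper does not prove this theorem at all: it is quoted, with citation, from Delsarte's 1973 paper, so there is no internal proof to compare your attempt against; what follows judges the attempt on its own terms. Your reduction of the design property to the constancy of $N_w(e)$, and the recasting of $N_w(e)$ as one coefficient of a coset weight enumerator of the code shortened at $T=\mathrm{supp}(e)$, is correct bookkeeping, and the coset MacWilliams identity is the right tool to bring in. But the attempt has a genuine gap exactly where you flag ``the main obstacle'': the passage from the hypothesis (at most $d'-t$ weights of $\C$ are $\geq t$) to the vanishing or $v$-independence of the character sums $\Phi_\ell(v)$ is asserted, not derived. Phrases like ``forces the inner distribution of $S_w$ to have vanishing dual distribution in degrees $1,\ldots,t$'' and ``makes the Krawtchouk/Vandermonde system unisolvent'' name the conclusion of Delsarte's argument rather than prove it. The actual derivation requires an annihilator-polynomial argument in the Hamming scheme: one expands a polynomial vanishing on all weights of $\C$ that are $\geq t$ (its degree is at most $d'-t$ by hypothesis) in the Krawtchouk basis, pairs it against the inner distribution of $S_w$, and uses that the dual inner distribution of $\C$ vanishes in degrees $1,\ldots,d'-1$ so that only the low-degree terms survive; the degree bound $d'-t$ is precisely what keeps the products below $d'$. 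Everything in your write-up before this step is definitional, so without it the proposal is a framework, not a proof.

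A second, independent problem: your machinery presupposes that $\C$ is $GF(q)$-linear (cosets $c_0+\C_0(T)$, a dual code $D_T^{\perp}$ under the standard inner product, shortening/puncturing duality). Delsarte's theorem as stated applies to an arbitrary $q$-ary code, where the dual distance is defined spectrally via the MacWilliams transform of the distance distribution and no dual code need exist; and in this paper the theorem is applied (via Corollary~\ref{Hoh}) to additive codes over $GF(4)$, which are only $GF(2)$-linear and whose duals are taken under the trace inner product. Your argument would at best cover the $GF(q)$-linear case; to reach the setting actually used here you would need either to rerun the coset argument with additive subgroups and the trace pairing, or to work in Delsarte's association-scheme formulation from the outset --- which is again the part you deferred.
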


Now we obtain generalized $t-$designs from additive codes over $GF(4)$.
\begin{corollary}\em{(\cite{Hoh})}\label{Hoh}
\it{Let $\C$ be an extremal even additive self-dual code over $GF(4)$ of
length $n=6m$ (respectively, $n=6m+2$). Then the set of codewords of weight $w$ in $\C$ with
$A_w\neq 0$ forms a generalized $2-$design (respectively, $1-$design) of type $3$.}
\end{corollary}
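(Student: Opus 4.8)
The plan is to derive the statement directly from Delsarte's theorem (Theorem \ref{Del}) by verifying its hypotheses for $\C$. Since $\C$ is self-dual, we have $\C = \Cperb$, so the dual distance $d'$ equals the minimum weight $d$ of $\C$ itself. The only real work is then to count the number $s$ of nonzero weights occurring in $\C$ and to check, for a suitable choice of $t$, both that $t \le d'$ and that $s \le d' - t$. Once this is done, Theorem \ref{Del} immediately gives that each set of codewords of a fixed weight $\ge t$ is a generalized $t$-design of type $q-1 = 3$, which is exactly the conclusion.

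First I would record the minimum weight. An extremal even additive self-dual code over $GF(4)$ attains the bound $d = 2\lfloor n/6 \rfloor + 2$, so for both $n = 6m$ and $n = 6m+2$ we obtain $d = 2m+2$, exactly as in Corollary \ref{Larkc}. Because $\C$ is even, every nonzero codeword has even weight, and all such weights lie in the range $d \le w \le n$. Counting the even integers in this interval bounds $s$ from above: for $n = 6m$ the possible weights $2m+2, 2m+4, \ldots, 6m$ number exactly $2m$, while for $n = 6m+2$ the possible weights $2m+2, 2m+4, \ldots, 6m+2$ number exactly $2m+1$.

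Now I would feed these counts into the Delsarte condition $s \le d' - t = d - t$. For $n = 6m$, taking $t = 2$ gives $d - t = 2m$, and since $s \le 2m$ the hypothesis holds (and $t = 2 \le 2m+2 = d'$ as $m \ge 1$); hence each nonzero-weight set is a generalized $2$-design of type $3$. For $n = 6m+2$, however, $s$ may equal $2m+1 > 2m = d - 2$, so $t = 2$ fails; instead I would take $t = 1$, for which $d - t = 2m+1 \ge s$, yielding a generalized $1$-design of type $3$. The main point to get right is precisely this counting step and the resulting parity mismatch: the single extra even weight available at length $6m+2$ is exactly what forces the drop from a $2$-design to a $1$-design. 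It is also worth noting that the phrase ``weight $\ge t$'' here covers \emph{all} nonzero weights, since $d = 2m+2 > 2 \ge t$ in every case, so the design property holds for every $w$ with $A_w \neq 0$ as claimed.
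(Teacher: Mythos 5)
Your proof is correct: the paper itself gives no proof of Corollary \ref{Hoh} (it is quoted from \cite{Hoh}), but your derivation---using self-duality to get $d'=d=2m+2$, counting the even weights in the range $[d,n]$ to bound the number of nonzero weights by $2m$ (resp.\ $2m+1$), and then applying Theorem \ref{Del} with $t=2$ (resp.\ $t=1$)---is exactly the intended argument behind the citation, and it is the standard route. Your identification of the parity mismatch (the one extra admissible even weight at length $6m+2$) as the reason the conclusion drops from a generalized $2$-design to a generalized $1$-design is also the right explanation.
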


Since two elements from two different sets among $\Gamma_{H_i}$'s, $i\in\{1,2,3\}$,  are sufficient to recover a secret, we are going to consider all the  combinations of only  two  distinct $\Gamma_{H_i}$'s when defining access structures.  That is, all the pairs $(s_i, s_j)$ for $s_i \in \Gamma_{H_i}$ and $s_j \in \Gamma_{H_j}$ with $i \ne j$, comprise all the elements of the access structure. An element $(s_i, s_j)\in \Gamma$ is called a $minimal~ access ~group$ if neither $s_i$ nor $s_j$ c-covers any other elements in $\Gamma_{H_i}$ and $\Gamma_{H_j}$, respectively. We let $\bar{\Gamma}=\{(s_i, s_j)| (s_i, s_j) \textrm{ is a minimal access group}\} $ and call it by the $minimal ~ access ~ structure$.

\begin{theorem}\label{thm1}
The access structure of this secret sharing scheme is given by
\begin{equation}
\Gamma=\{(x,y)| x\in \Gamma_{H_i} ~\textrm{and}~ y\in \Gamma_{H_j}, ~\textrm{where}~ i\neq j~\textrm{and}~ i,j\in\{1,2,3\}\}.
\end{equation}
The number of parties in the scheme is $n-1$ and the access
structure has the following properties:
\begin{itemize}
\item[$\bullet$] Any group of size less than $d-1$    cannot be used to recover the
secret.
\item[$\bullet$]  There are $\mu_i(p,e1) \mu_j(q,e1)$ pairs of groups of size $(p-1,q-1)$ in   $\Gamma_{H_i}\times \Gamma_{H_j}$, $i \ne j$,  that can  recover
the secret, where $e1$ is any vector of weight 1 in $GF(4)^n$ .

\item[$\bullet$] When the parties come together, up to $\lfloor \frac{d-1}{2} \rfloor$ cheaters  can be
found in each group.

\item[$\bullet$] $\Gamma$ is a minimal access structure if for every element $(x,y)$ in $\Gamma$, no element of $\Gamma_{H_i}$ and $\Gamma_{H_j}$ are subsets of $x$ and $y$, respectively.
\end{itemize}
\end{theorem}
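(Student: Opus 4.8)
The plan is to settle the opening identity and the party count directly from the definitions preceding the theorem, and then to treat each of the four bullet points in turn, drawing on Lemma~\ref{lem2}, the generalized $t$-design machinery of Theorem~\ref{Del} and Corollary~\ref{Hoh}, and the cheater-tolerance discussion of Section~\ref{sec:intro}. The identity $\Gamma=\{(x,y)\mid x\in\Gamma_{H_i},\,y\in\Gamma_{H_j},\,i\neq j\}$ and the count of $n-1$ parties are immediate: Lemma~\ref{lem2} shows that recovering $s$ is equivalent to exhibiting qualifying codewords from two distinct sets $H_i$, and the supports of those codewords with the leading coordinate deleted are, by definition, the elements of the corresponding $\Gamma_{H_i}$; the parties are the $P_1,\dots,P_{n-1}$ fixed in the setup, with $P_0$ the dealer.

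For the first bullet (groups of size $<d-1$ cannot recover $s$), I would argue through the weight of the qualifying dual codewords. Any codeword lying in some $H_i$ is a \emph{nonzero} word of $\Cperb$ whose support is $\{0\}\cup S$, where $S$ is contained in the participant index set of the group. Since the ambient code here is self-dual, $\C=\Cperb$ and the minimum weight of $\Cperb$ equals $d$, so every such word has weight at least $d$, forcing $|S|\ge d-1$ and hence the group to contain at least $d-1$ indices. Contrapositively, a group of fewer than $d-1$ participants cannot contain the support of any word in any $H_i$, so by Lemma~\ref{lem2} it cannot recover $s$.

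For the second bullet I would combine the multiplication principle with the generalized design property. By Lemma~\ref{lem2} every pair $(x,y)\in\Gamma_{H_i}\times\Gamma_{H_j}$ with $i\neq j$ is an access group, so the qualifying pairs of prescribed sizes factor as independent choices in each coordinate. Fixing a weight-one vector $e1$, the number of weight-$p$ words of $H_i$ that c-cover $e1$ is $\mu_i(p,e1)$, a quantity independent of the choice of $e1$ by the generalized $t$-design property (Theorem~\ref{Del}, Corollary~\ref{Hoh}); each such word yields a group of size $p-1$ in $\Gamma_{H_i}$, and likewise for $H_j$. Hence the count of qualifying pairs of sizes $(p-1,q-1)$ is the product $\mu_i(p,e1)\,\mu_j(q,e1)$. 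The third bullet restates the cheater-tolerance of Section~\ref{sec:intro}: a modified share is an error in the codeword $\mathbf{t}=\mathbf{u}G$, and minimum distance $d$ permits correcting up to $\lfloor\frac{d-1}{2}\rfloor$ of them, so at most that many cheaters can be localized in each pooled group. The fourth bullet is an unwinding of the definition of minimal access group given just before the theorem: $\Gamma$ coincides with its minimal substructure $\bar{\Gamma}$ exactly when every $(x,y)\in\Gamma$ is minimal, i.e.\ when neither $x$ c-covers a further element of $\Gamma_{H_i}$ nor $y$ c-covers a further element of $\Gamma_{H_j}$, which is precisely the stated condition.

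The main obstacle I anticipate is the first bullet: expressing the size bound in terms of $d$ rather than the dual distance $d'$ relies on the self-dual hypothesis $d=d'$, and one must check carefully that the support of a qualifying dual codeword, once coordinate $0$ is deleted, really does sit inside the participant index set of the group, so that the weight-to-size translation is tight. The remaining bullets are comparatively routine: the second is a product count that is only meaningful once the generalized design property guarantees $\mu_i(p,e1)$ is well defined, and the third and fourth are essentially direct appeals to, respectively, the error-correction bound and the minimality definition already recorded above.
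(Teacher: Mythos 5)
Your proposal is correct and follows essentially the same route as the paper's own proof: bullet one from the minimum weight of dual codewords bounding group sizes, bullet two from Lemma~\ref{lem2} plus the product count with the $\mu_i(p,e1)$ quantities, bullet three from the error-correcting capability of the code, and bullet four from the definition of minimal access structure. You merely spell out details the paper leaves implicit (notably the self-duality identification $d=d'$ behind the first bullet and the design-theoretic well-definedness of $\mu_i(p,e1)$), which is a faithful elaboration rather than a different argument.
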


\begin{proof}
The first property is trivial from the definition of $\Gamma_{H_i}$'s. The minimum size of any group in $\Gamma_{H_i}$ is greater or equal to $d-1$. Thus any group of size  less than $d-1$ cannot be used to recover the secret. We can get the second property from the proof of Lemma \ref{lem2}. Since any element in $\Gamma_{H_i}\times \Gamma_{H_j}$, $i \ne j$, can recover the secret $s$, there are  $\mu_i(p,e1) \mu_j(q,e1)$ pairs of groups of size $(p-1,q-1)$  that can  recover
the secret. The third property comes from the error-correcting capability of additive codes over GF(4). The fourth property is from the definition of the minimal access structure.
\qed
\end{proof}

\begin{corollary}\label{cor1}
The pairs of groups from $\Gamma_{H_i}\times \Gamma_{H_j}$, $i \ne j$,  in this SSS based on an additive self-dual code $\C$ have  the following size distribution generating function  :
\begin{equation}
 \sum_p \sum_q \mu_i(p,e1) \mu_j(q,e1)y^{(p-1,q-1)},
\end{equation}
where $p$ and $q$ denote the weights of codewords.

Furthermore, the size distribution generating function for the access structure is as follows:

\begin{equation}\label{co1}
 \sum_{i \ne j}\sum_p \sum_q \mu_i(p,e1) \mu_j(q,e1)y^{(p-1,q-1)}.
\end{equation}

\end{corollary}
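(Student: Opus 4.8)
The plan is to treat this corollary as a direct bookkeeping consequence of Theorem~\ref{thm1}, repackaging the counts established there into a bivariate generating function. The monomial $y^{(p-1,q-1)}$ should be read as a two-variable monomial that records a pair of group sizes $(p-1,q-1)$, so that the coefficient attached to it is precisely the number of pairs of access groups of that joint size. First I would recall from the second bullet of Theorem~\ref{thm1} that, for a fixed ordered index pair $(i,j)$ with $i\ne j$, the number of pairs $(x,y)\in \Gamma_{H_i}\times\Gamma_{H_j}$ with $|x|=p-1$ and $|y|=q-1$ equals $\mu_i(p,e1)\,\mu_j(q,e1)$, where $e1$ is a weight-one vector. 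No new analytic input is needed beyond this.

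Next I would make explicit the correspondence between codewords and group sizes. By the definition of $\Gamma_{H_i}$, an element of $\Gamma_{H_i}$ is the support of a codeword in $H_i$ with the position $0$ deleted; since every codeword in $H_i$ carries a fixed nonzero entry in position $0$, a codeword of weight $p$ in $H_i$ produces an access group of size exactly $p-1$. Hence, as $x$ ranges over the size-$(p-1)$ elements of $\Gamma_{H_i}$ and $y$ over the size-$(q-1)$ elements of $\Gamma_{H_j}$, the product $\mu_i(p,e1)\,\mu_j(q,e1)$ counts the pairs of that joint size. Collecting these coefficients over all weights $p$ and $q$ yields the first generating function $\sum_p\sum_q \mu_i(p,e1)\,\mu_j(q,e1)\,y^{(p-1,q-1)}$, which is exactly the size distribution of $\Gamma_{H_i}\times\Gamma_{H_j}$.

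Finally I would assemble the full access structure. By Theorem~\ref{thm1}, $\Gamma$ is the union of the sets $\Gamma_{H_i}\times\Gamma_{H_j}$ over the admissible distinct index pairs, so its size distribution generating function is obtained by summing the per-pair generating functions, giving $\sum_{i\ne j}\sum_p\sum_q \mu_i(p,e1)\,\mu_j(q,e1)\,y^{(p-1,q-1)}$. Two conventions need to be pinned down here. One is the exact range of the index sum $\sum_{i\ne j}$ together with the ordered-versus-unordered status of the pairs in $\Gamma$, so as to count each pair of groups once and not duplicate the contribution of $\Gamma_{H_i}\times\Gamma_{H_j}$ against $\Gamma_{H_j}\times\Gamma_{H_i}$. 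The other is that the quantities $\mu_i(p,e1)$ are genuine constants independent of the choice of the weight-one vector $e1$; this is precisely where the generalized $t$-design structure of the self-dual code is invoked (Corollary~\ref{Hoh}, via Theorem~\ref{Del}), which guarantees that each weight class c-covers every weight-one vector a fixed number of times.

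The main obstacle is therefore not any hard computation but the careful fixing of these two conventions: the precise meaning of the multivariate monomial $y^{(p-1,q-1)}$ as a formal size-recording device, and the exact scope of the index sum $\sum_{i\ne j}$ in relation to how $\Gamma$ is defined as a set of pairs. Once these are settled, both displayed identities follow immediately from the count already proved in Theorem~\ref{thm1}, and the corollary is essentially a reformulation of that count in generating-function language.
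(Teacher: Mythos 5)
Your proposal is correct and follows essentially the same route as the paper: the paper's own proof is just the cardinality bookkeeping $|\Gamma_{H_i}|=\sum_p \mu_i(p,e1)$, hence $|\Gamma_{H_i}\times\Gamma_{H_j}|=\sum_p\sum_q \mu_i(p,e1)\,\mu_j(q,e1)$, followed by taking the union over $i\ne j$, which is exactly the count you extract from the second bullet of Theorem~\ref{thm1}. Your extra attention to the two conventions (the weight-$p$ codeword giving a size-$(p-1)$ group, and the ordered-pair scope of $\sum_{i\ne j}$, where the paper's worked examples in fact tacitly count only one pair's worth) is a refinement the paper omits, but it does not change the argument.
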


\begin{proof}
Since $|\Gamma_{H_i}|=\sum_p \mu_i(p,e1)$,  $|\Gamma_{H_i}\times \Gamma_{H_j}|=\sum_p \sum_q \mu_i(p,e1) \mu_j(q,e1)$  for $i \ne j$.
From this we get the size distribution generating function  of $\Gamma_{H_i}\times \Gamma_{H_j}$.
Since $|\cup_{i \ne j} \Gamma_{H_i}\times \Gamma_{H_j}|=\sum_{i \ne j}\sum_p \sum_q \mu_i(p,e1) \mu_j(q,e1)$, we get the size distribution generating function for the access structure as  (\ref{co1}).\qed

\end{proof}

Note that we redefined minimal access group of SSSs from additive codes over $GF(4)$ considering its distinct way of recovering the secret $s$. Thus we can now develop Theorem \ref{mini} for SSSs based on additive codes over $GF(4)$.
\begin{theorem}
Let $C$ be an $(n,2^k)$ code over $GF(4)$, and let  $G=(\bold{g}_0, \bold{g}_1,\cdots, \bold{g}_{n-1})$ be its generator matrix.  If each nonzero codeword of $C$ is a minimal vector, then in the secret sharing scheme based on $C^{\perp}$, there are altogether $3\cdot 2^{2k-4}$ minimal access groups if the secret $s\in GF(4)$. In addition, we have the following:
\begin{itemize}
\item[$\bullet$] If $\bold{g}_i$ is the same vector to  $\bold{g}_0$, $1\le i \le n-1$, then participant $P_i$ must be in every  $\Gamma_{H_k}$, $1\le k\le 3$. Such a participant is called a $dictatorial$  $participant$ in SSS based on GF(4).

\item[$\bullet$] If $\bold{g}_i$ is not same to  $\bold{g}_0$, $1\le i \le n-1$, then participant $P_i$ must be in $3^3\cdot 2^{2k-8}$ out of $3\cdot 2^{2k-4}$ minimal access groups.

\end{itemize}
\end{theorem}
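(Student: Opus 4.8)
The plan is to transfer the counting behind Theorem~\ref{mini} to the two-step setting, organising everything around the $GF(2)$-linear coordinate-evaluation maps $\mathrm{ev}_j:\C\to GF(4)$, $x\mapsto x_j$. Since the scheme is built on $\Cperb$, the access components live in the dual of $\Cperb$, namely $\C$ itself, and a codeword $x\in\C$ contributes a support to $\Go$, $\Gt$, or $\Gtt$ exactly according as $x_0=1$, $\om$, or $\ob$; codewords with $x_0=0$ contribute nothing. First I would record that the hypothesis ``the secret $s\in GF(4)$'' means precisely that $\mathrm{ev}_0$ is surjective, so $\ker(\mathrm{ev}_0)$ has $GF(2)$-dimension $k-2$ and each of the three values $1,\om,\ob$ is attained by exactly $2^{k-2}$ codewords. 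Thus each of $H_1,H_2,H_3$ has $2^{k-2}$ members (after the harmless exclusion of a putative weight-one codeword $(x_0,0,\dots,0)$, which by definition lies in no $H_t$).

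Next I would show $|\Go|=|\Gt|=|\Gtt|=2^{k-2}$, i.e.\ that passing from codewords to supports loses nothing inside a single $H_t$. If $x,x'\in H_t$ share a support then each covers the other, so minimality of $x$ forces $x'=\lambda x$ for some $\lambda\in GF(4)^{\ast}$; comparing $0$-th coordinates gives $\lambda x_0=x_0$ with $x_0\neq0$, hence $\lambda=1$ and $x=x'$. Since every nonzero codeword is minimal, every element of each $\Go,\Gt,\Gtt$ is a minimal access component, so by the definition preceding Theorem~\ref{thm1} every cross pair is a minimal access group. Counting minimal access groups then reduces to choosing an unordered pair of indices $\{a,b\}\subseteq\{1,2,3\}$ and one support from each of $\Gamma_{H_a},\Gamma_{H_b}$, which gives $\binom{3}{2}\cdot 2^{k-2}\cdot 2^{k-2}=3\cdot 2^{2k-4}$, as claimed.

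For the dictatorial statement, if $\BF{g}_i=\BF{g}_0$ then $x_i=\BF{u}\BF{g}_i=\BF{u}\BF{g}_0=x_0$ for every codeword $x=\BF{u}G$; hence on each $H_t$ we have $x_i=x_0\neq0$, so position $i$ lies in every support and $P_i$ belongs to every element of $\Go,\Gt$, and $\Gtt$. For the remaining bullet I would pass to the joint map $\mathrm{ev}_{0,i}:\C\to GF(4)^2$, $x\mapsto(x_0,x_i)$. When this map is surjective, $\ker(\mathrm{ev}_{0,i})$ has dimension $k-4$, so among the $2^{k-2}$ codewords of a fixed $H_t$ exactly $2^{k-4}$ have $x_i=0$ and the remaining $3\cdot 2^{k-4}$ have $x_i\neq0$; that is, $P_i$ lies in $3\cdot 2^{k-4}$ of the $2^{k-2}$ supports of each $\Gamma_{H_t}$. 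Reading ``$P_i$ belongs to the minimal access group $(s_a,s_b)$'' as $P_i$ lying in both constituent supports, the number of such groups is $\binom{3}{2}(3\cdot 2^{k-4})^2=3\cdot 3^2\cdot 2^{2k-8}=3^3\cdot 2^{2k-8}$, matching the stated value (equivalently a fraction $(3/4)^2$ of all $3\cdot 2^{2k-4}$ groups).

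The delicate point, and the one I expect to be the real obstacle, is the surjectivity of $\mathrm{ev}_{0,i}$. The hypothesis as written, $\BF{g}_i\neq\BF{g}_0$, is not by itself enough: if $\BF{g}_i=\om\BF{g}_0$ then $x_i=\om x_0$ for all codewords, so $x_i\neq0$ on all of each $H_t$ and $P_i$ would again be dictatorial rather than lying in $3^3\cdot 2^{2k-8}$ groups. The count above needs the genuinely stronger assumption that $\BF{g}_i$ is not a $GF(4)$-scalar multiple of $\BF{g}_0$; concretely, writing each of $\BF{g}_0,\BF{g}_i$ in the $GF(2)$-basis $\{1,\om\}$ of $GF(4)$ yields four binary vectors in $GF(2)^k$, and surjectivity of $\mathrm{ev}_{0,i}$ is equivalent to their $GF(2)$-independence (which already subsumes surjectivity of $\mathrm{ev}_0$). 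I would therefore prove surjectivity under this non-proportionality hypothesis by a rank computation on the resulting $4\times k$ binary matrix, and flag the replacement of ``same vector'' by ``scalar multiple'' as the precise condition under which the second bullet holds.
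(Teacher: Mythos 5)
Your argument is correct and is essentially the paper's own: the paper likewise counts $2^{k-2}$ codewords in each $H_t$ (from equidistribution of $\mathbf{u}\mathbf{g}_0$ over $GF(4)$), forms unordered pairs from two distinct $\Gamma_{H_i}$'s to get $3\cdot 2^{k-2}\cdot 2^{k-2}=3\cdot 2^{2k-4}$, handles the dictatorial case exactly as you do, and for the last bullet counts $3\cdot 2^{k-4}$ codewords per $H_t$ with $x_i\neq 0$ and multiplies to get $3\cdot(3\cdot 2^{k-4})^2=3^3\cdot 2^{2k-8}$. Two of your refinements go beyond the paper, and both are sound. The codeword-versus-support point is minor: in its proof the paper simply identifies $\Gamma_{H_i}$ with a set of codewords, whereas your minimality argument (equal supports force scalar proportionality, and comparing $0$-th coordinates forces equality) actually justifies that identification. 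The hypothesis issue you flag in your last paragraph is a genuine defect of the theorem as stated, and the paper's own proof betrays it: midway it replaces ``$\mathbf{g}_i$ is not the same vector as $\mathbf{g}_0$'' by ``$\mathbf{g}_0$ and $\mathbf{g}_i$ are linearly independent,'' and then asserts without justification that $(\mathbf{u}\mathbf{g}_0,\mathbf{u}\mathbf{g}_i)$ is equidistributed over $GF(4)^2$ as $\mathbf{u}$ runs through $GF(2)^k$. Your counterexample $\mathbf{g}_i=\om\,\mathbf{g}_0$ (such a $P_i$ is dictatorial, lying in all $3\cdot 2^{2k-4}$ groups rather than the claimed $3^3\cdot 2^{2k-8}$) kills the stated hypothesis, and even $GF(4)$-independence does not suffice: with $\mathbf{g}_0=(1,\om,0,\dots,0)^T$ and $\mathbf{g}_i=(0,0,1,1,0,\dots,0)^T$ the value $\mathbf{u}\mathbf{g}_i$ stays in $GF(2)$, so equidistribution fails and the count is wrong. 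The condition you isolate --- $GF(2)$-independence of the four binary component vectors of $\mathbf{g}_0$ and $\mathbf{g}_i$, equivalently surjectivity of $\mathbf{u}\mapsto(\mathbf{u}\mathbf{g}_0,\mathbf{u}\mathbf{g}_i)$ --- is exactly what the count requires, so your restated hypothesis is the one under which the second bullet is actually true.
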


\begin{proof}
At the beginning of this section, we assumed that none of $\bold{g}_i$'s is the zero vector. Hence $\bold{g}_0\ne 0$. Thus $\bold{ug}_0$ takes on each element of $GF(4)$ exactly $2^{k-2}$ times when $\bold{u}$ ranges over all elements of $GF(2)^k$. Hence there are $2^k-2^{k-2}$ codewords in $C$ with an nonzero component in its first coordinate. Since each nonzero codeword is a minimal vector, a codeword c-covers another one  if and only if they are the same vector. Hence the total number of minimal codewords is $2^k-2^{k-2}=3\cdot 2^{k-2}$.

Since
\begin{equation*}
\begin{split}
|\Go=\{\bold{c}| c_0= \bold{ug}_0&=1,~~\bold{c}\in C\}|=|\Gt=\{\bold{c}| c_0= \bold{ug}_0=\om,~~\bold{c}\in C\}|\\
&=|\Gtt=\{\bold{c}| c_0= \bold{ug}_0=\ob,~~\bold{c}\in C\}|=\frac{3\cdot 2^{k-2}}{3}=2^{k-2},\\
\end{split}
\end{equation*}
the number of minimal access groups is
\begin{equation*}
\sum_{i\ne j}|\Gamma_{H_i}\times \Gamma_{H_j}|= 3\times 2^{k-2}\times 2^{k-2}= 3\cdot 2^{2k-4}.
\end{equation*}

If $\bold{g}_i=\bold{g}_0$, $1\le i \le n-1$, then  $\bold{ug}_0=a\ne 0$ implies $\bold{ug}_i=a$. Thus the participant $P_i$ is involved in every $\Gamma_{H_k}$, $1\le k\le 3$. If $\bold{g}_0$ and $\bold{g}_1$ are linearly independent, then $(\bold{ug}_0, \bold{ug}_i)$ takes on each element of $GF(4)^2$ exactly $3^2\cdot 2^{k-4}$ when the vector $\bold{u}$ ranges over $GF(4)^k$. Hence
\begin{equation*}
|\{\bold{u}| \bold{ug}_0\ne 0,~~\textrm{and}~\bold{ug}_i\ne 0\}|=3^2\cdot 2^{k-4}
\end{equation*}
and
\begin{equation*}
\begin{split}
|\{\bold{u}| \bold{ug}_0=1,~~\textrm{and}~\bold{ug}_i\ne 0\}|&=|\{\bold{u}| \bold{ug}_0=\om,~~\textrm{and}~\bold{ug}_i\ne 0\}|\\
=|\{\bold{u}| \bold{ug}_0=\ob,~~\textrm{and}~\bold{ug}_i\ne 0\}|&=3\cdot 2^{k-4}.\\
\end{split}
\end{equation*}
Thus the number of minimal groups in which $P_i$ is involved is
\begin{equation*}
3\times 3\cdot 2^{k-4} \times 3\cdot 2^{k-4}=3^3\cdot 2^{2k-8}.
\end{equation*}\qed
\end{proof}

The accessibility degree for SSSs based on additive codes can be defined as the follows.
\begin{definition}\label{def5}
\rm{The $accessibility~ index$ on $P$ is the map  $\delta_{P} : \mathcal{A}_P  \longrightarrow \mathbb{R}$    given by}
\begin{equation*}
\delta_{P}(\Gamma)=\frac{|\Gamma|}{2^{2m}}~\textrm{for}~\Gamma \in \mathcal{A}_P
\end{equation*}
\rm{where $m=|P|$, the number of participants. The number $\delta_{P}(\Gamma)$  will be called the $accessibility~ degree$ of structure $\Gamma$.}
\end{definition}
Here we have to divide $|\Gamma|$ by $2^{2m}$ since  this is 2-step SSS and we have to pool the participants' shares twice.

Let $\Gamma$ be the access structure above. Then we can determine the accessibility degree of access structure $\Gamma$  for SSS based on an additive code over $GF(4) $ by
\begin{equation*}
\delta_{P}(\Gamma)=\frac{1}{2^{2n-2}} \sum_{i \ne j}\sum_p \sum_q \mu_i(p,e1) \mu_j(q,e1).
\end{equation*}

\begin{example}
We will describe SSS using the $(6,2^6)$ hexacode.
Let $\mathcal{G}_6$ be a linear $[6,3,4]$ $hexacode$ over $GF(4)$ whose generator matrix is as follows :
\[
{\setlength{\arraycolsep}{2pt}
\renewcommand{\arraystretch}{0.8}
\left[ \begin{array}{cccccc}
1 &0 &0  &1  &\om &\om  \\
 0 & 1  &0  &\om  &1  &\om  \\
 0 & 0  &1  &\om  &\om  &1   \\
\end{array}
\right].}\]
The weight enumerator of  the hexacode $\mathcal{G}_6$ is :
\begin{equation*}
1~+~ 45y^4~+~18y^6.
\end{equation*}
The vectors of weight 4 in $\mathcal{G}_6$ hold a 2-design by Theorem \ref{theorem1} and the vectors of weight 6 hold 1-design  by  Lemma 3.2 in \cite{Dou}. Note that $45=\lambda_2\binom{6}{2}/\binom{4}{2}$, whence $\lambda_2=18$. Thus $\lambda_1=18\binom{5}{1}/\binom{3}{1}=30$. It is easy to see that there are  10 supports of blocks in $\Gamma_{H_1}$, considering scalar multiplication. That is, these numbers can be obtained by dividing $\lambda$ for the 1-design held by these vectors by 3. The following is the size  distribution of the access structure of the hexacode $\mathcal{G}_6$.
\begin{equation}
\sum_{i\in \{4,6\}}\lambda_1(D_i)y^{i-1}=10y^3+6y^5.
\end{equation}

The accessibility degree of the access structure for the linear hexacode  $\mathcal{G}_6$ is
\begin{equation*}
\delta_{P}(\Gamma)=\frac{|\Gamma|}{2^{m}}=\frac{16}{2^5}=\frac{1}{2}=0.5.
\end{equation*}

Now let us think of  $\mathcal{G}_6$ as an additive code. Then it has the following generator matrix:
\[
{\setlength{\arraycolsep}{2pt}
\renewcommand{\arraystretch}{0.8}
\left[ \begin{array}{cccccc}
1 &0 &0  &1  &\om &\om  \\
\om& 0&0&\om&\ob &\ob \\
 0 & 1  &0  &\om  &1  &\om  \\
0&\om&0&\ob &\om&\ob \\
 0 & 0  &1  &\om  &\om  &1   \\
0&0&\om&\ob&\ob&\om \\
\end{array}
\right].}\]

Note that there are two kinds of blocks for extremal additive even self-dual codes : one is a simple block and the other is of  multiplicity 3. By Theorem \ref{Larkt}, it is easy to check  that the supports of weight 4 vectors  form a 2-design with possibly repeated blocks and the supports of weights 4 and 6 vectors form  1-designs by Lemma \ref{lem1}.    Since the hexacode $\mathcal{G}_6$ is  extremal even additive self-dual,   the weight $4$ and $6$ codewords hold  generalized 2-designs of type 3  by Corollary \ref{Hoh}. These codewords eventually hold generalized 1-designs of type 3 by  the definition of generalized $t$-designs. It implies that there are same number of blocks in each $\Gamma_{H_i}$. For example,  when  $\lambda_1(D_4)=10$,  $\mu_i(4,e1)=10$. Similarly, $\mu_i(6,e1)=6$ when $\lambda_1(D_6)=6$.  The supports of the vectors are described in Table 1.

\begin{table}[h]\label{table1}
\centering
\caption{The supports of each weight  for the hexacode $\mathcal{G}_6$ }
\begin{tabular}[c]{|c|c|c|c|}
\hline
 &$\Go(x_0=1)$ &$\Gt(y_0=\om)$ &$\Gtt(z_0=\ob)$\\ \hline
\multirow{10}{*}{wt4} &  $\{ 2,3,4 \}$ & $ \{ 2,3,4 \}$ & $ \{ 2,3,4 \}$\\
& $ \{ 2,3,5 \}$ & $\{ 2,3,5 \}$ & $\{ 2,3,5 \}$\\
& $ \{ 2,3,6 \}$ & $ \{ 2,3,6 \}$ & $ \{ 2,3,6 \}$\\
& $ \{ 2,4,5 \}$ &$ \{ 2,4,5 \}$ &$ \{ 2,4,5 \}$ \\
& $ \{ 2,4,6 \}$ &$ \{ 2,4,6 \}$ &$ \{ 2,4,6 \}$ \\
& $ \{ 2,5,6 \}$ &$ \{ 2,5,6 \}$ &$ \{ 2,5,6 \}$ \\
& $ \{ 3,4,5 \}$ &$ \{ 3,4,5 \}$ &$ \{ 3,4,5 \}$ \\
& $ \{ 3,4,6 \}$ &$ \{ 3,4,6 \}$ &$ \{ 3,4,6 \}$ \\
& $ \{ 3,5,6 \}$ &$ \{ 3,5,6 \}$ &$ \{ 3,5,6 \}$ \\
& $ \{ 4,5,6 \}$ & $ \{ 4,5,6 \}$ & $ \{ 4,5,6 \}$ \\ \hline
$\#$ of wt 4& 10&10&10\\ \hline \hline
wt 6& $\{2,3,4,5,6\}$ &  $\{2,3,4,5,6\}$ &  $\{2,3,4,5,6\}$ \\ \hline
$\#$ of wt 6& 6&6 &6 \\ \hline \hline
Total $\#$ &16&16&16\\ \hline
\end{tabular}
\end{table}

The size distribution of the access structure of the hexacode $\mathcal{G}_6$ by Corollary \ref{cor1} is
\begin{equation}
\begin{split}
\sum_{\substack{i \ne j\\ 1\le i,j\le 3}}\sum_{p\in \{4,6\}} &\sum_{q\in \{4,6\}} \mu_i(p,e1) \mu_j(q,e1)y^{(p-1,q-1)}\\
&=100y^{(3,3)}+60y^{(3,5)}+60y^{(5,3)}+36y^{(5,5)}.
\end{split}
\end{equation}
These pairs of groups comprise the 256 elements of the access structure.
Additionally, a group of size 6  do not c-cover  any group of size 4 . If a weight 4 vector were c-covered by a weight 6 vector, then the sum of the two vectors will yield a weight 2 vector, which is a contradiction. Thus 256 pairs of supports form the minimal access structure.
Note that $\Go$ in Table 1  is the access structure for linear hexacode $\mathcal{G}_6$.

We summarize  the following properties of SSS using  the hexacode.\\
\begin{itemize}
\item[]  $\bold{Summary}$ :
\item[] We summarize  the following properties of SSS using  the hexacode.
\item[(i)]  The access structure consists of 100 pairs of sets of size (3,3), 60 pairs of groups of size (3,5),  60 pairs of groups of size (5,3),
36 pairs of groups of size (5,5).
\item[(ii)]  All the pairs of groups constitute the minimal access structure.
\item[(iii)]  No group of size less than 3 can be used in recovering the secret.
\end{itemize}

Moreover, the accessibility degree for the access structure of the additive hexacode  $\mathcal{G}_6$ is
\begin{equation*}
\delta_{\mathcal{P}}(\Gamma)=\frac{1}{2^{2m}} \sum_{i \ne j}\sum_p \sum_q \mu_i(p,e1) \mu_j(q,e1)=\frac{256}{2^{10}}=\frac{1}{4}=0.25.
\end{equation*}

\end{example}

\begin{example}
Let us consider  a self-dual [12,6,4] code $E_{12}$ with a generator matrix \cite{Mac}
\[
{\setlength{\arraycolsep}{2pt}
\renewcommand{\arraystretch}{0.8}
\left[ \begin{array}{cccccccccccc}
1&1&1&1&0&0&0&0&0&0&0&0\\
0&0&1&1&1&1&0&0&0&0&0&0\\
0&0&0&0&1&1&1&1&0&0&0&0\\
0&0&0&0&0&0&1&1&1&1&0&0\\
0&0&0&0&0&0&0&0&1&1&1&1\\
1&0&1&0&1&0&1&0&1&0&1&0\\
\end{array}
\right].}\]
The weight enumerator of $E_{12}$ is :
\begin{equation}
1+45y^4+216y^6+1755y^8+1800y^{10}+279y^{12}.
\end{equation}
 Since we cannot apply Theorem \ref{theorem1} to $E_{12}$, we have to get the access structure by  MAGMA, which is one of  the commonly used computer languages. Using this, we obtain the following size distribution of access structure for $E_{12}$ :
\begin{equation}
5y^3+36y^5+390y^7+500y^9+93y^{11}.
\end{equation}
The accessibility degree of the access structure  for the SSS based on $E_{12}$  is
 \begin{equation*}
\delta_{\mathcal{P}}(\Gamma)=\frac{|\Gamma|}{2^m}=\frac{1024}{2^{11}}=\frac{1}{2}=0.5.
\end{equation*}

Now we will describe an SSS based on an extremal additive even self-dual $(12,2^{12})$ dodecacode $QC\_12$ (see \cite{Hoh},~\cite{Huff},~\cite{Lark}).
It  has the following generator matrix
\[
{\setlength{\arraycolsep}{2pt}
\renewcommand{\arraystretch}{0.8}
\left[ \begin{array}{cccccccccccc}
0&0&0&0&0&0&1&1&1&1&1&1\\
0&0&0&0&0&0&\om&\om&\om&\om&\om&\om\\
1&1&1&1&1&1&0&0&0&0&0&0\\
\om&\om&\om&\om&\om&\om&0&0&0&0&0&0\\
0&0&0&1&\om&\ob&0&0&0&1&\om&\ob\\
0&0&0&\om&\ob&1&0&0&0&\om&\ob&1\\
1&\ob&\om&0&0&0&1&\ob&\om&0&0&0\\
\om&1&\ob&0&0&0&\om&1&\ob&0&0&0\\
0&0&0&1&\ob&\om&\om&\ob&1&0&0&0\\
0&0&0&\om&1&\ob&1&\om&\ob&0&0&0\\
1&\om&\ob&0&0&0&0&0&0&\ob&\om&1\\
\ob&1&\om&0&0&0&0&0&0&1&\ob&\om\\
\end{array}
\right].}\]
The weight enumerator of  $QC\_12$ is :
\begin{equation*}
1~+~ 396y^6~+~1485y^8~+~1980y^{10}~+~234y^{12}.
\end{equation*}

 By Corollary \ref{Larkc},  the supports of weight 6 vectors forms a 5-design with possibly repeated blocks.  We remark
that there are 18 codewords of weight 6 whose supports  repeat three
times. Among them, exactly 12 codewords are such that their scalar multiples are
also codewords. Since $A_6=396$, there are 396-18=378
codewords of weight 6 whose supports are simple blocks. Note that
$396=\lambda_5\binom{12}{5}/\binom{6}{5}$, whence $\lambda_5=3$. Allowing repeated blocks we obtain a
$5$-$(12,6,\lambda_5=3)$ design in $QC\_12$. Each 5-set is either contained in one
block repeated three times or in three distinct blocks. We see
that vectors of other weights 8, 10, and 12 hold 5-designs with possibly repeated
blocks with $\lambda_5=105,~630$ and $234$, respectively \cite{Lark}.

Since   $QC\_12$ is  extremal even additive self-dual,   the  codewords of all the nonzero weights hold generalized 2-designs of type 3   by Corollary \ref{Hoh} and also hold generalized 1-designs of type 3 by the definition of generalized $t$-designs. Thus we have the following numbers $\mu_i(p,e1)$  by dividing $\lambda_1$  by 3.  When $\lambda_5=3$ for weight 6 codewords, $\lambda_1=3\binom{11}{4}/\binom{5}{4}=198$.   It implies that $\mu_1(6,e1)=\mu_2(6,e1)=\mu_3(6,e1)=66$.  Repeating the calculation for weight 8 with $\lambda_5=105$, we get $\lambda_1=105\binom{11}{4}/\binom{7}{4}=990$. Thus
 $\mu_1(8,e1)=\mu_2(8,e1)=\mu_3(8,e1)=330$. For weight 10 with $\lambda_5=630$, $\lambda_1=630\binom{11}{4}/\binom{9}{4}=1650$. Thus
 $\mu_i(10,e1)=550$. For weight 12 with $\lambda_5=234$, $\lambda_1=234\binom{11}{4}/\binom{11}{4}=234$. Thus
 $\mu_i(12,e1)=78$.

By Corollary \ref{cor1}, the size distribution of the access structure for the dodecacode $QC\_12$  is
\begin{equation}\label{e5}
\begin{split}
&\sum_{\substack{i \ne j\\ 1\le i,j\le 3}}\sum_{p\in \{6,8,10,12\}} \sum_{q\in \{6,8,10,12\}} \mu_i(p,e1) \mu_j(q,e1)y^{(p-1,q-1)}\\
&=4356y^{(5,5)}+21780y^{(5,7)}+36300y^{(5,9)}+5148y^{(5,11)}+21780y^{(7,5)}\\
&+108900y^{(7,7)}+181500y^{(7,9)}+25740y^{(7,11)}+36300y^{(9,5)}
+181500y^{(9,7)}\\
&+302500y^{(9,9)}+42900y^{(9,11)}+5148y^{(11,5)}+25740y^{(11,7)}+42900y^{(11,9)}\\
&+6084y^{(11,11)}.\\
\end{split}
\end{equation}

A vector of weight 8  does not c-cover  any vector of weight 6. If a weight 6 vector were c-covered by a weight 8 vector, then the sum of the two vectors will yield a weight 2 vector, which is a contradiction. Likewise, a vector of weight 10  does not c-cover  any vector of weight 6 or 8 . If a weight 6 or 8 vector were c-covered by  a weight 10 vector, then the sum of the two vectors will yield a weight 4, or 2 vector, respectively, which is a contradiction.

\begin{itemize}
\item[]  $\bold{Summary}$ :
\item[] We summarize  the following properties of SSS using  $QC\_12$.
\item[(i)] The access structure consists of the pairs of groups as in (\ref{e5}).
\item[(ii)] All the pairs of groups with the sizes $\in \{5,7,9\}$ are contained in the minimal access structure.
\item[(iii)] No group of size less than 5 can be used in recovering the secret.
\end{itemize}

The accessibility degree for the SSS based on the dodecacode $QC\_12$ is
\begin{equation*}
\delta_{\mathcal{P}}(\Gamma)=\frac{1}{2^{2m}} \sum_{i\ne j}\sum_p \sum_q \mu_i(p,e1) \mu_j(q,e1)=\frac{1,048,576}{2^{22}}=\frac{1}{4}=0.25.
\end{equation*}

\end{example}

\begin{example}
Now we are going to describe an SSS based on $S_{18}$ which is  an $(18, 2^{18})$ extremal additive even self-dual code. The weight enumerator of $S_{18}$ is \cite{Mac}:
\begin{equation}
1+2754y^8+18360y^{10}+77112y^{12}+110160y^{14}+50949y^{16}+2808y^{18}.
\end{equation}
Since all the non-zero weights in  $S_{18}$ hold $5$-designs with possibly repeated blocks by Corollary \ref{Larkc}, we can easily calculate $\lambda_1$ for each weight using (\ref{eq2}).

Note that
$2754=\lambda_5\binom{18}{5}/\binom{8}{5}$, whence $\lambda_5=18$. Allowing repeated blocks we obtain a
$5$-$(18,8,\lambda_5=18)$ design in $S_{18}$. Each $5$-set is either contained in one
block repeated three times or in three distinct blocks. We see
that vectors of other weights 10, 12, 14, 16, and 18 hold 5-designs with possibly repeated
blocks with $\lambda_5=540,~7128,~25740,~ 25974,$ and $2808$, respectively.

Since   $S_{18}$ is  extremal even additive self-dual,   the  codewords of all the nonzero weights hold generalized 2-designs of type 3  by Corollary \ref{Hoh}  and also hold generalized 1-designs of type 3 by the definition of generalized $t$-designs. Thus we have the following numbers $\mu_i(p,e1)$  by dividing $\lambda_1$  by 3.  When $\lambda_5=18$ for weight 8 codewords, $\lambda_1=18\binom{17}{4}/\binom{7}{4}=1224$.   It implies that $\mu_1(8,e1)=\mu_2(8,e1)=\mu_3(8,e1)=408$.  Repeating the calculation for weight 10 with $\lambda_5=540$, we get $\lambda_1=540\binom{17}{4}/\binom{9}{4}=10200$. Thus
 $\mu_1(10,e1)=\mu_2(10,e1)=\mu_3(10,e1)=3400$.
For weight 12 with $\lambda_5=7128$, $\lambda_1=7128\binom{17}{4}/\binom{11}{4}=51408$. Thus
 $\mu_i(12,e1)=17136$. For weight 14 with $\lambda_5=25740$, $\lambda_1=25740\binom{17}{4}/\binom{13}{4}=85680$. Thus
 $\mu_i(14,e1)=28560$.  For weight 16 with $\lambda_5=25974$, \\
$\lambda_1=25974\binom{17}{4}/\binom{15}{4}=45288$. Thus
 $\mu_i(16,e1)=15096$. For weight 18 with $\lambda_5=2808$, $\lambda_1=2808\binom{17}{4}/\binom{17}{4}=2808$. Thus
 $\mu_i(18,e1)=936$.

By Corollary \ref{cor1}, the size distribution of the access structure for  $S_{18}$  is
{\small
\begin{equation}\label{e6}
\begin{split}
&\sum_{\substack{i \ne j\\ 1\le i,j\le 3}}\sum_{p\in \{8,10,12, 14, 16, 18\}} \sum_{q\in \{8,10,12, 14, 16, 18\}} \mu_i(p,e1) \mu_j(q,e1)y^{(p-1,q-1)}\\
&=166464y^{(7,7)}+1387200y^{(7,9)}+6991488y^{(7,11)}+11652480y^{(7,13)}+6159168y^{(7,15)}\\
&+381888y^{(7,17)}+1387200y^{(9,7)}+11560000y^{(9,9)}+58262400y^{(9,11)}
+97104000y^{(9,13)}\\
&+51326400y^{(9,15)}+3182400y^{(9,17)}+6991488y^{(11,7)}+58262400y^{(11,9)}\\
&+293642496y^{(11,11)}+489404160y^{(11,13)}+258685056y^{(11,15)}+16039296y^{(11,17)}\\
&+11652480y^{(13,7)}+97104000y^{(13,9)}+489404160y^{(13,11)}+815673600y^{(13,13)}
\\
&+431141760y^{(13,15)}+26732160y^{(13,17)}+6159168y^{(15,7)}+51326400y^{(15,9)}
\\
&+258685056y^{(15,11)}+431141760y^{(15,13)}+227889216y^{(15,15)}+14129856y^{(15,17)}\\
&+381888y^{(17,7)}+3182400y^{(17,9)}+16039296y^{(17,11)}+26732160y^{(17,13)}\\
&+14129856y^{(17,15)}+876096y^{(17,17)}.\\
\end{split}
\end{equation}
}

A vector of weight 10  does not c-cover  any vector of weight 8. If a weight 8 vector were c-covered by a weight 10 vector, then the sum of the two vectors will yield a weight 2 vector, which is a contradiction. Likewise, a vector of weight 12  does not c-cover  any vector of weight 8 or 10 . If a weight 8 or 10 vector were c-covered by  a weight 12 vector, then the sum of the two vectors will yield a weight 4, or 2 vector, respectively, which is a contradiction.
 A vector of weight 14  does not c-cover  any vector of weight 8 ,10 or 12 . If a weight 8, 10  or 12 vector were c-covered by  a weight 14 vector, then the sum of the two vectors will yield a weight 6, 4, or 2 vector, respectively, which is a contradiction.

\begin{itemize}
\item[]  $\bold{Summary}$ :
\item[] We summarize  the following properties of SSS using  $S_{18}$.
\item[(i)] The access structure consists of the pairs of groups as in (\ref{e6}).
\item[(ii)] All the pairs of groups with the sizes $\in \{7,9,11,13\}$ are contained in the minimal access structure.
\item[(iii)] No group of size less than 7 can be used in recovering the secret.
\end{itemize}

The accessibility degree for the SSS based on $S_{18}$  is
\begin{equation*}
\delta_{\mathcal{P}}(\Gamma)=\frac{1}{2^{2m}} \sum_{i\ne j}\sum_p \sum_q \mu(p,e1) \mu(q,e1)=\frac{4294967296}{2^{34}}=\frac{1}{4}=0.25.
\end{equation*}
The accessibility degree of the access structure  for the SSS based on $S_{18}$  is $\frac{1}{4}$ which is same as that of  the dodecacode $QC\_12$.

\end{example}

\section{Conclusion}

In this paper,  we introduce two contrasting access structures, one from linear codes and the other from additive codes. The new results we obtained are mainly stated in Section 3.2 and 3.3. In Section 3.2, we newly defined SSSs based on additive codes over $GF(4)$.  The access structure from additive codes over $GF(4)$ is described  in a distinct way requiring at least two steps of calculations.   In Section 3.3, we determined the access structure for SSSs based on a hexacode, a dodecacode over $GF(4)$ and $S_{18}$ using the notion we introduced in Section 3.2.

\end{document}